\newcommand{\footremember}[2]{
    \footnote{#2}
    \newcounter{#1}
    \setcounter{#1}{\value{footnote}}
}
\theoremstyle{definition}
\newtheorem{theorem}{Theorem}
\newtheorem{remark}[theorem]{Remark}
\newtheorem{assumption}{Assumption}
\newtheorem{example}{Example}
\newcommand{\E}{\mathbb E}
\newcommand{\e}{\mathrm e}
\newcommand{\Q}{\mathbb Q}
\newcommand{\D}{\mathrm{d}}
\newcommand{\F}{\mathcal F}
\begin{document}

\title{Analytic formula for option margin with liquidity costs under dynamic delta hedging\footremember{A}{This work was supported by “Human Resources Program in Energy Technology” of the Korea Institute of Energy Technology Evaluation and Planning (KETEP), granted financial resource from the Ministry of Trade, Industry \& Energy, Republic of Korea. (No. 20184010201680). Kyungsub Lee was supported by the 2020 Yeungnam University Research Grant.}}
\author{
  Kyungsub Lee\footremember{KLee}{ksublee@yu.ac.kr, Department of Statistics, Yeungnam University, Gyeongsan, Gyeongbuk 38541, Korea}
  \and Byoung Ki Seo\footremember{BKSeo}{(corresponding author) bkseo@unist.ac.kr, School of Business Administration, Ulsan National Institute of Science and Technology, Ulsan 44919, Korea, Tel.: +82 52 217 3150.}
  }
\date{}

\maketitle

\begin{abstract}
This study derives the expected liquidity cost when performing the delta hedging process of a European option.
This cost is represented by an integration formula that includes European option prices and a certain function depending on the delta process.
We first define a unit liquidity cost and then show that the liquidity cost is a multiplication of the unit liquidity cost, stock price, supply curve parameter, and the square of the number of options.
Using this formula, the expected liquidity cost before hedging can be calculated much faster than when using a Monte Carlo simulation.
Numerically computed distributions of liquidity costs in special cases are also provided.
\end{abstract}

%\begin{keyword}
%Liquidity cost \sep Delta hedging \sep European option \sep Limit order \sep Quadratic variation
%\end{keyword}
%\end{frontmatter}

\section{Introduction}

When an option trader has options, s/he may perform dynamic delta hedge to protect the price risk of the underlying. If the realized volatility is greater than the volatility s/he expected (i.e., implied volatility s/he entered into), then the option buyer[seller] will make a profit[loss]; Hence, option traders are often called volatility traders. It has been established that the instantaneous profit or loss from an option, and the respective dynamic delta hedge, is given by the following equation:

\begin{equation}
 \D\Pi_t = \dfrac 12 (\sigma_r^2 - \sigma_i^2) S_t^2 \Gamma_i \D t
 \end{equation}
 
where $\sigma_r$, $\sigma_i$, $S_t$, and $\Gamma_i$ is the realized volatility, the implied volatility, the underlying price, and the gamma, respectively. This means that if a volatility trader guesses the implied volatility with relative accuracy (i.e., similarly to the (to-be) realized volatility), then the option premium s/he paid[got] would be very close to the profit or loss of the underlying dynamic delta hedge.

However, this equilibrium is only possible based on the assumptions that there is no transaction cost, and that the trader can buy/sell the underlying at the price that is taken into consideration when the realized volatility is calculated (perhaps mid-price). Therefore, transaction cost and liquidity cost should be counted as margins when a volatility trader considers the price of an option. As such, the expected liquidity cost for dynamic delta hedging may also be regarded as a major factor determining the bid/ask spread of option prices.

In stock markets, traders buy or sell stocks through orders, including limit and market orders.
A limit order is an order to buy or sell a stock at a specified (or a better) price that remains on an order book (i.e., the list of all buy and sell limit orders), until it is filled, canceled, or given a designated time.
With limit orders, there is no price risk because the order is filled at the limit (or a better) price; however, in this case, the risk is associated with the time until, and uncertainty of, execution, as the limit orders do not guarantee execution.
A market order is an order to buy or sell a contract at the best current price, which is determined by the outstanding limit orders.
With market orders, one can buy or sell immediately, but there is a price risk depending on the number of contractible shares and the liquidity of the market (i.e., the status of existing limit orders).

The distribution of the limit prices and corresponding order sizes determine the cost of trading, especially when trading is based on market orders.
If there are fewer liquid limit orders, then there are larger bid-ask spreads, and trading is thus likely to be more expensive.
The liquidity cost during trading also depends on the shape of the limit order curve, which is determined by the outstanding limit price and the size of orders. This paper provides a semi-analytic formula for the expected liquidity cost of performing a delta hedging process on a European option within a presumed limit order curve.

A growing body of literature has focused on modeling and examining the dynamics of limit orders; however, this paper cites only a few recent studies.
\cite{Lo2002} proposed an econometric model of the time-to-execution of limit orders based on survival analysis; the findings show that the generalized gamma model fits historical data better than the theoretical first passage time approach.
\cite{Smith2003}, \cite{Cont2010} and \cite{Abergel} developed a stochastic model for the dynamics of the order flows, such as the limit orders, market orders, and cancellations based on the Poisson arrivals, without specific assumptions on the behaviors or preferences of the market participants.
For more information on the statistical property and modeling of the order book dynamics or the order-driven market, please see \cite{Maslov2000}, \cite{Bouchaud2002}, \cite{Hollifield2004}, \cite{Large2007}, \cite{Toke2011}, \cite{Gould2012}, \cite{Huang2012}, \cite{Malo}, \cite{Cont2013}, \cite{Xu}, and \cite{Chiu}.

Despite successfully incorporating the statistical properties of the order dynamics into a stochastic order flow model, and for the analytical purpose of the expected liquidity cost, this study employed a deterministic supply curve of the underlying stock in accordance with \cite{Cetin2004}, \cite{Cetin2006}, and \cite{Jarrow}.
The supply curve is the stock price per share that an investor pays or receives by a market order and is represented as a function of the quantity of stock executed in a market order.
In the literature, the authors examined the pricing of options in the extended framework of \cite{Black1973}, in terms of arbitrage pricing under illiquid conditions of an underlying asset.

There is numerous literature on computing the additional cost of delta-hedging under the relaxation of frictionless and competitive hypotheses in the Black-Scholes framework.
It is dated back to \cite{Leland}, who developed a modified option replicating strategy in the presence of transaction costs using a discrete-time replication scheme under the continuous-time framework.
\cite{Boyle} derived replicating strategies for European options with transaction costs in a binomial framework of \cite{cox1979option}.
With proportional transaction cost, those work provided the European option prices regardless of the supply of underlying asset, i.e., without the concept of liquidity risk due to the timing and size of trading.
Meanwhile, in the context of \cite{Cetin2004}, liquidity costs depend on the shape of outstanding limit orders of the time when a trade occurs, and hence are different from deterministic and proportional transaction costs or fees.
The supply curve of underlying is represented as a function that satisfies certain conditions such as twice continuously differentiability and non-decreasing in the trade size.
We argue that a linear approximation is sufficient to calculate the expected liquidity cost in both continuous and discrete trading.
This simplification leads that our result is also related to the classical work with proportional transaction cost by \cite{Leland} and \cite{Boyle}.

This study shows that the expected liquidity cost of the delta hedging process depends on the linear approximation of the supply curve, in both continuous and discrete trading.
Take, as an example, a European option with a non-perfectly liquid underlying stock in an order-driven stock market, where the exchange is executed using limit and market orders.
Furthermore, consider that the investor who sells the European option wishes to hedge the price risk by purchasing the delta amounts of the underlying stock.
In this case, the delta is the sensitivity of the option price with respect to the underlying stock price change, and the expected liquidity cost of the delta hedging is represented as an integration formula.
The total expected liquidity cost of the delta hedging process is represented by a multiplication of the unit liquidity cost, current stock price, supply curve parameter, and the square of the numbers of European options.
The unit liquidity cost will then be the liquidity cost of the delta hedging process and the unit supply curve parameter.

The integration formula for the expected liquidity cost includes European option prices with various strike prices and maturities and depends on the delta function of the option being hedged.
The methods calculated by the options' values in a static portfolio is an extension of previous work such as, \cite{Demeterfi1999}, \cite{Britten2000}, \cite{CarrMadan}, \cite{CarrWu}, \cite{Carr2013}, \cite{ChoeLee}, and \cite{Lee}.
The numerical computation of the integration is much faster than the computation of the liquidity cost based on a simulation.

The remainder of the paper is organized as follows.
Section~\ref{Sect:conti} derives the integration formula for the expected liquidity cost of delta hedging processes under a continuous trading assumption and provides an example in the Black-Scholes framework. 
The derivation is extended to the discrete trading case in Section~\ref{Sect:disc}.
Section~\ref{Sect:numeric} reports the result of numerical and simulation tests for the expected liquidity cost, and 
Section~\ref{Sect:dist} shows the distribution of the liquidity cost in special cases.
Section~\ref{Sect:LOB} explains the relationship between an existing queuing limit order model of \cite{Cont2010} and our approach to calculate the expected liquidity cost.
Finally, section~\ref{Sect:concl} concludes the paper.

\section{Expected liquidity cost under continuous trading}\label{Sect:conti}
This section introduces a probability space with a time index set, $[0,T]$, for some fixed $T>0$.
Let $(\Omega, \mathcal F, \mathbb P)$ be a filtered probability space with a filtration, $\{\mathcal F_t\}_{t\in [0, T]}$, where $\mathcal F_{T} = \mathcal F$.
This space satisfies the usual conditions.
The measure, $\mathbb P$, is the physical probability measure and there exists an equivalent martingale measure, $\Q$, under which all financial assets are priced (i.e., the discounted asset prices are martingale).
It is worth noting that all the processes introduced in this paper are defined in the probability space and are adapted to filtration.

If an investor wishes to perform a delta hedge process of an option with an underlying stock that is not perfectly liquid, they can immediately buy or sell the necessary number of shares for the process at the best available price (i.e., a market order) whenever they want.
The best price is determined by the outstanding limit orders, price specifications, and the number of shares for buying or selling, which is given by liquidity providers.

Table~\ref{Table:Limit} and the blue solid lines of Figure~\ref{Fig:Limit} illustrate an example of outstanding limit orders of Microsoft (left) and Intel (right) observed on 2012.06.21.
When the quantity (x-axis) of the graph is negative it is for a bid order, and when the quantity is positive, it is for an ask order.
The table shows that if the investor wants to buy 150,000 shares of Microsoft by market orders, they pay 30.14 dollars for 28,632 shares, 30.15 dollars for 83.663 shares and 30.16 dollars for 37,705 shares.
Let $m(q)$ be the limit order curve presented in the figure (blue solid lines) as a function of the quantity $q$.
This function is also called the marginal price curve.

The supply curve of the underlying stock at time $t$ is defined as
\begin{equation}
S(t,z) = \frac{1}{z}\int_0^z m(q) \D q \label{Eq:supply_curve}
\end{equation}
which represents the stock price that an investor pays (or receives) for order flow $z>0$ (or $z<0$) per share.
The terminology ``supply curve'' comes from \cite{Cetin2004}.
In some studies, the quantity, $\int_0^z m(q) \D q$, is called the total cost of a market order of $z$ shares of the stock \citep{Malo}.

When $z=0$, $S(t,z) = S_t$, which can be the mid-price of the best bid and ask prices.
In Figure~\ref{Fig:Limit}, the supply curves are indicated by the dotted red lines on the corresponding blue solid limit order curves. 
Practically, due to the minimum tick size, the plotted supply curve is a piecewise differentiable curve, and there is a jump at $z=0$.
On the other hand, for analysis of continuous trading, it is assumed that the supply curve is differentiable with respect to $z$, at least around $z=0$.
This theoretical assumption can be achieved by smoothing the curve around zero; an example of the smoothing procedure is provided later in the paper.
The analysis based on continuous trading and smooth supply curve is a theoretical issue.
From a practical perspective in the discrete trading case,
the differentiability of the supply curve is not required, as discussed in the next section.

According to \cite{Cetin2004}, the supply curve should satisfies:

\begin{enumerate}
	\item 
	$S(t, z)$ is $\mathcal F_t$-measurable and non-negative.
	\item
	$S(t, z)$ is non-decreasing in $z$, i.e., $z_1 \leq z_2$ implies $S(t, z_1) \leq S(t, z_2)$. 
	\item 
	$S(t, z)$ is $C^2$ in $z$.
	\item 
	$S(\cdot, 0)$ is a semi-martingale.
	\item 
	$S(\cdot, z)$ has continuous sample paths for all $z$.
\end{enumerate}
\cite{Cetin2006} assumed an exponential supply curve,
$$ S(t, z) = \e^{\alpha z} S(t, 0),$$
which is chosen for simplicity and the ease of generalization.
The other alternative forms are the supply curves with diminishing marginal price impact,
$ \alpha \mathrm{sign}(z)\sqrt{|z|}$ and $\alpha \mathrm{sign}(z)\log(1+|z|)$.	
The linear supply curve is the first-order approximation of the exponential form.
\cite{Blais2010} reported that linear supply curves are fitted well for highly liquid stocks
and jump linear curves are suitable for less liquid equities.

%\begin{table}
%\centering
%\caption{Limit orders of Samsung Electronics in KRW at 2019.06.18., 14:44. }\label{Table:Limit}
%\begin{tabular}{rc|cr} 
%\hline
%Size & Bid price & Ask price & Size \\
%\hline
%43,812   &  44,250 &  44,300  & 97,095 \\
%42,349   &  44,200 &  44,350  & 65,893 \\
%43,176   &  44,150 &  44,400  & 77,979 \\
%21,493   &  44,100 &  44,450  & 61,757 \\
%20,151   &  44,050 &  44,500  & 139,199\\
%42,132   &  44,000 &  44,550  & 51,396 \\
%16,044   &  43,950 &  44,600  & 56,960 \\
%38,784   &  43,900 &  44,650  & 45,124 \\
%42,232   &  43,850 &  44,700  & 40,478 \\
%13,530   &  43,800 &  44,750  & 22,669 \\
%\hline
%\end{tabular}
%\end{table}

%\begin{table}
%\centering
%\caption{Limit orders of Hyundai Motors in KRW at 2019.06.18., 14:46. }\label{Table:Limit}
%\begin{tabular}{rc|cr} 
%\hline
%Size & Bid price & Ask price & Size \\
%\hline
%21,098  &  140,000 &  140,500  & 13,157 \\
%11,185  &  139,500 &  141,000  & 12,351 \\
%8,108   &  139,000 &  141,500  & 9,809  \\
%6,717   &  138,500 &  142,000  & 9,957  \\
%5,362   &  138,000 &  142,500  & 10,064 \\
%3,185   &  137,500 &  143,000  & 14,610 \\
%2,903   &  137,000 &  143,500  & 6,732  \\
%740     &  136,500 &  144,000  & 3,562  \\
%875     &  136,000 &  144,500  & 8,241  \\
%282     &  135,500 &  145,000  & 8,992  \\
%\hline
%\end{tabular}
%\end{table}

\begin{table}
\centering
\caption{Limit orders of Microsoft (MSFT) and Intel (INTC) at a certain moment on 2012.06.21.}\label{Table:Limit}
\begin{tabular}{rc|cr} 
\hline
Size & Bid price & Ask price & Size \\
\hline
51,326  &  30.13 &  30.14  & 28,632  \\
84,106  &  30.12 &  30.15  & 83,663  \\
8,706   &  30.11 &  30.16  & 66,999  \\
44,038  &  30.10 &  30.17  & 86,886  \\
167,571 &  30.09 &  30.18  & 110,006 \\
14,134  &  30.08 &  30.19  & 30,006  \\
46,380  &  30.07 &  30.20  & 72,106  \\
23,774  &  30.06 &  30.21  & 56,500  \\
23,646  &  30.05 &  30.22  & 36,532  \\
36,675  &  30.04 &  30.23  & 31,600  \\
\hline
\end{tabular} \,\,
\begin{tabular}{rc|cr} 
\hline
Size & Bid price & Ask price & Size \\
\hline
9,091   &  26.71 &  26.72  & 125,104 \\
34,683  &  26.70 &  26.73  & 174,683 \\
30,295  &  26.69 &  26.74  & 110,674 \\
47,583  &  26.68 &  26.75  & 59,778  \\
58,874  &  26.67 &  26.76  & 60,883  \\
79,774  &  26.66 &  26.77  & 39,946  \\
78,200  &  26.65 &  26.78  & 62,840  \\
11,200  &  26.64 &  26.79  & 28,655  \\
24,200  &  26.63 &  26.80  & 43,600  \\
52,000  &  26.62 &  26.81  & 47,600  \\
\hline
\end{tabular}

\end{table}

\begin{figure}
\centering
\includegraphics[width=0.45\textwidth]{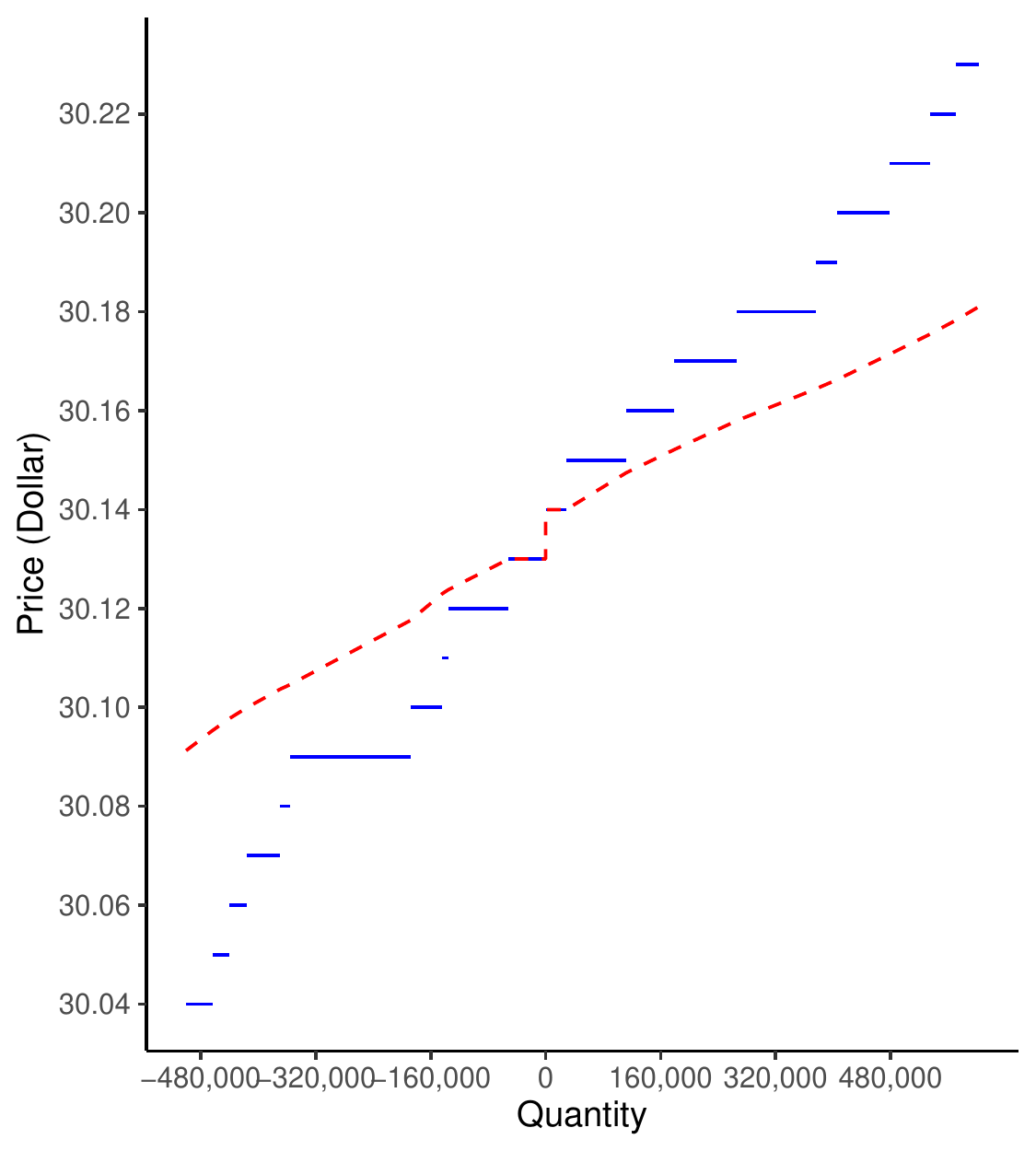} \,\,
\includegraphics[width=0.45\textwidth]{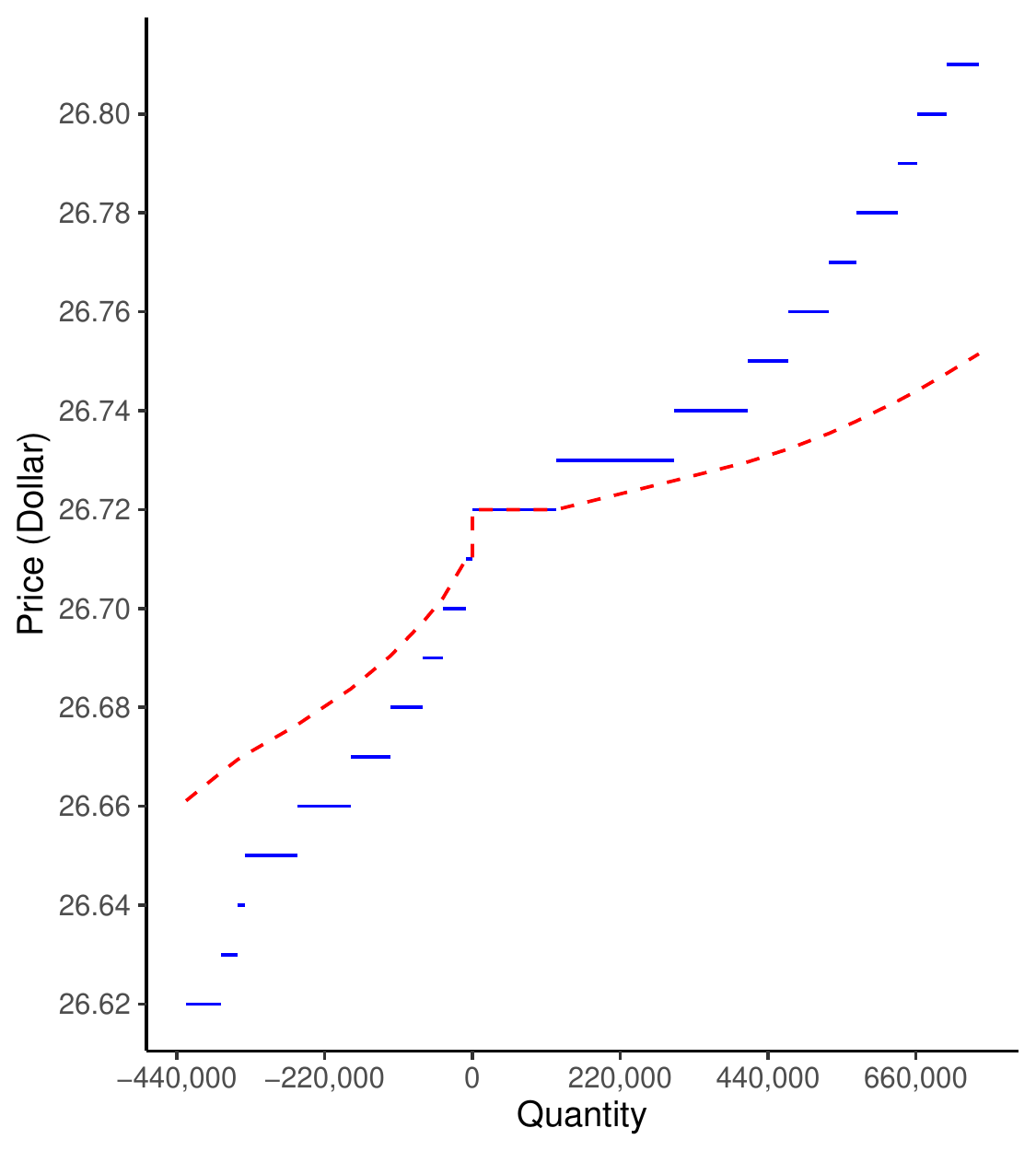}
\caption{Limit order curve (solid blue line) and corresponding supply curve (dotted red line) of Microsoft (left) and Intel (right) at a certain moment on 2012.06.21.}\label{Fig:Limit}
\end{figure}

In some studies, including \cite{Cetin2004}, \cite{Cetin2006} and \cite{Jarrow}, the liquidity cost during a trading period of $0<u\leq t$ is defined as
\begin{equation}
L_t = \sum_{0 <  u \leq t}\Delta Z_u [S(u,\Delta Z_u)-S(u,0)] + \int_0^t \frac{\partial S}{\partial z}(u,0)\D [Z]_u^c \label{Eq:LC}
\end{equation}
where $Z_t$ is the number of holding shares of the stock at $t$,
and $[Z]^c$ is the quadratic variation process of the continuous part of $Z$.
The first term on the right-hand side represents the liquidity cost for discrete trading.
The term $[S(u,\Delta Z_u)-S(u,0)]$ represents the additional cost for $\Delta Z_u$ shares of the stock due to illiquidity.
The second term, involving an integration with respect to the quadratic variation process of the continuous part of $Z$, represents the liquidity cost for continuous trading.
Roughly, the integration form in the equation originates from
\begin{equation}
\sum_{0<u\leq t} \frac{S(u,\Delta Z_u)-S(u,0)}{\Delta Z_u} (\Delta Z_u)^2 \rightarrow \int_0^t \frac{\partial S}{\partial z}(u,0)\D [Z]_u^c.
\end{equation}
In Eq.~\eqref{Eq:LC}, 
the trading cost at time 0 is executed,
because in a delta hedging process of a European option, the initial cost is simply calculated by the initial delta of said option.

As only the first-order partial derivative of $S(u, z)$ with respect to $z$ appears in the integration form in Eq.~\eqref{Eq:LC},
only the linear approximation of the supply curve at $z=0$ needs to be considered for analysis in the case of continuous trading. 
The linear approximation is represented by $S(t,z) \approx S_t + \alpha S_t z$ with some constant $\alpha$.

Consider the liquidity cost due to continuous trading with a continuous delta process $D(t, x)$ with $x=S_t$ at time $t$,
where the delta process is assumed to be a function of $t$ and $S_t$, as follows:
\begin{equation}
\int_0^T \frac{\partial S}{\partial z}(t,0)\D [Z]_t^c \approx \int_0^T \alpha S_t \D [D]_t = \int_0^T \alpha S_t \left(\frac{\partial D(t,S_t)}{\partial x} \right)^2 \D [S]_t \label{Eq:CLC}
\end{equation}
where $T$ is the maturity of the option and we use the fact that
$$ \D D_t =  \left(\frac{\partial D(t,S_t)}{\partial t} + \frac{1}{2} \frac{\partial^2 D(t,S_t)}{\partial x^2} \right) \D t + \frac{\partial D(t,S_t)}{\partial x}\D S_t $$
and
$$ \D [D]_t =  \left(\frac{\partial D(t,S_t)}{\partial x} \right)^2 \D [S]_t.$$
The superscript $c$ of $[S]$ is omitted because the continuous underlying price process is considered.
For instance, under the Black-Scholes framework, $\frac{\partial D(t,S_t)}{\partial x}$ is simply the Black-Scholes gamma and $\D [S]_t = \sigma^2 S_t^2 \D t$ with a volatility parameter $\sigma$.

The aim is to represent the risk-neutral expectation of the above equation in terms of European option prices with various maturities and strike prices.
A future price process with maturity $t$, defined by $F^{(t)}_u = \E^{\Q}[S_t | \F_u]$, for $u\leq t$, a martingale under $\Q$ is introduced.
For simplicity, assume that the instantaneous interest rate is constant at $r$.
The future price process can then be represented simply as $F^{(t)}_u = \e^{r(t-u)} S_u$.

Hereafter, $\E^{\Q}[\int_0^T \alpha S_t \D [D]_t]$ is derived from the European option prices.
If $F^{(t)}$ is continuous, then for a continuous function $g(x)$ and its antiderivative $G(x)$, it is known that
\begin{align}
\begin{aligned}
\int_0^t g\left(F^{(t)}_s \right) \D[F^{(t)}]_s &= 2 \left\{ \int_0^t \left(G \left(F^{(t)}_0\right) - G\left(F^{(t)}_s\right)\right) \D F^{(t)}_s  \right. \\
& \left. + \int_0^{F^{(t)}_0} g(k) \left( k-F^{(t)}_t \right)^+ \D k + \int_{F^{(t)}_0}^{\infty} g(k) \left( F^{(t)}_t-k \right)^+  \D k\right\}.\label{Eq:qintegration}
\end{aligned}
\end{align}
Please see \cite{ChoeLee} and \cite{Lee}.
Note that $F^{(t)}_t$ simply equals $S_t$.
This is derived from a comparison of the following two equations, involving It\`{o}'s formula for $\bar G$, which is the second antiderivative of $g$
\begin{align}
\bar G\left(F_t^{(t)}\right) - \bar G\left(F_0^{(t)}\right) = \int_0^t G\left(F_s^{(t)}\right) \D F_s^{(t)} + \frac{1}{2}\int_0^t g\left( F_0^{(t)}\right) \D [F^{(t)}]_s
\end{align}
and Taylor's theorem with the integral form of the remainder term
\begin{align}
\begin{aligned}
\bar G\left(F_t^{(t)}\right) - \bar G\left(F_0^{(t)}\right) 
&= G\left(F_0^{(t)}\right)\left(F_t^{(t)} - F_0^{(t)}\right) \\
&+\int_0^{F^{(t)}_0} g(k) \left( k-F^{(t)}_t \right)^+ \D k+ \int_{F^{(t)}_0}^{\infty} g(k) \left( F^{(t)}_t-k \right)^+  \D k. \label{Eq:Taylor1}
\end{aligned}
\end{align}

In addition, by taking $\Q$-expectation on both sides of Eq.~\eqref{Eq:qintegration} and assuming the stochastic integration with respect to $F^{(t)}$ in the equation is a $\Q$-martingale, we have
\begin{equation}
\E^{\Q} \left[ \int_0^t g\left(F^{(t)}_s \right) \D[F^{(t)}]_s \right] = 2 \e^{rt}\int_0^{\infty} g(k) \phi^{(t)}(S_0,k) \D k \label{Eq:expectation1}
\end{equation}
or equivalently, under the constant interest rate assumption,
\begin{align}
\E^{\Q} \left[ \int_0^t \e^{2r(t-s)}g\left(e^{r(t-s)}S_s \right) \D[S]_s \right] = 2 \e^{rt}\int_0^{\infty} g(k) \phi^{(t)}(S_0,k) \D k \label{Eq:expectation2}
\end{align}
where
\begin{equation}
\phi^{(t)}\left( S_0, k \right)=\left\{
      \begin{array}{ll}
        p^{(t)}\left( S_0, k \right) = \e^{-rt}\E^{\Q} [ (k-S_t)^+], & 0 < k \leq \e^{rt}S_0, \\
        c^{(t)}\left( S_0, k \right) = \e^{-rt}\E^{\Q} [ (S_t-k)^+], & \e^{rt}S_0 < k < \infty.
      \end{array}
    \right.
\end{equation}
Note that $p^{(t)}\left( S_0, k \right)$ and $c^{(t)}\left( S_0, k \right)$ are the European put and call option prices, with maturity $t$ and strike price $k$, respectively.

In Eq.~\eqref{Eq:expectation2}, the function $g$ in the integrand depends solely on the stock price process $S$,
and we need an extended version of the above argument to a function of two variables of $t$ and $S_t$ to deal with the expected liquidity cost represented in Eq.~\eqref{Eq:CLC}.
The following technical assumption is for treating a non-differentiable point in European option payoffs.

\begin{assumption}\label{Assum:1}
Let $\psi(t,x) \in C^1$ in both $t$ and $x$.
The partial derivatives, $\frac{\partial \psi}{\partial x}$ and $\frac{\partial^2 \psi}{\partial t \partial x}$, are absolutely continuous with respect to $x$, for all $t$, such that
\begin{align}
&\frac{\partial \psi}{\partial x}(t, b) - \frac{\partial \psi}{\partial x}(t, a) = \int_a^b \frac{\partial^2 \psi}{\partial x^2}(t, x) \D x \\
&\frac{\partial^2 \psi}{\partial t \partial x}(t, b) - \frac{\partial^2 \psi}{\partial t \partial x}(t, a) = \int_a^b \frac{\partial^3 \psi}{\partial t \partial x^2}(t, x) \D x
\end{align}
where the second partial derivatives in the r.h.s. can be the generalized derivatives in the distributional sense.
Let $F^{(T)}$ be a continuous process and
assume that
\begin{align}
\int_0^t \frac{\partial \psi}{\partial x}\left(s, F^{(T)}_s \right)\D F_s^{(T)},\quad t\leq T
\end{align}
is a $\Q$-martingale with respect to $\F$.
For example, if 
\begin{equation}
\frac{\partial \psi}{\partial x}\left(t, F^{(T)}_t \right) \in L^2_{\Q, [F^{(T)}]}([0,T]\times\Omega)
\end{equation}
where $L^2_{\Q, [F^{(T)}]}([0,T]\times\Omega)$ is the space of the adapted process X, such that
\begin{equation}
\E^{\Q} \left[\int_0^T X_u^2 \D [F^{(T)}]_u  \right] < \infty 
\end{equation}
then the martingale property of the stochastic integral is guaranteed, see \cite{Kuo}.
In addition, for convenience, let 
\begin{equation}
f(t,x)= \frac{\partial^2 \psi}{\partial x^2}(t,x).
\end{equation}

\end{assumption}

\begin{theorem}\label{Thm:implied_liquidity}
Under Assumption~\ref{Assum:1},
The risk-neutral expectation of the integration of $f\left(t,F^{(T)}\right)$ with respect to the quadratic variation of the futures price process $[F^{(T)}]$ is represented by the weighted European option prices.
That is
\begin{align}
\begin{aligned}
\E^{\Q} &\left[ \int_0^T \e^{2r(T-t)}f\left(t, \e^{r(T-t)}S_t \right)\D [S]_t \right] \\
=& - 2\int_0^T \int_0^\infty \e^{r(2T-t)}\frac{\partial f}{\partial t}\left(t,\e^{r(T-t)}k\right)\phi^{(t)}(S_0, k) \D k \D t \\
&+ 2\e^{rT}\int_0^\infty f(T,k) \phi^{(T)}(S_0,k) \D k.
\end{aligned}
\end{align}
\end{theorem}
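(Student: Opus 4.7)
The plan is to reduce the two-variable statement to the single-maturity identity \eqref{Eq:expectation2} by peeling off the time dependence of $f$ via the fundamental theorem of calculus in $t$, then applying \eqref{Eq:expectation2} term by term after a Fubini swap and a rescaling between the futures processes $F^{(T)}$ and $F^{(s)}$.

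First I would rewrite the LHS in futures coordinates. Since $F^{(T)}_t=\e^{r(T-t)}S_t$ satisfies $\D[F^{(T)}]_t=\e^{2r(T-t)}\D[S]_t$, the LHS equals $\E^{\Q}[\int_0^T f(t,F^{(T)}_t)\,\D[F^{(T)}]_t]$. The $C^1$-in-$t$ regularity of $f=\partial^2\psi/\partial x^2$, inherited from the absolute continuity of $\partial^2\psi/\partial t\partial x$ in $x$ under Assumption~\ref{Assum:1}, lets me write at each fixed $x$
\begin{equation*}
f(t,x)=f(T,x)-\int_t^T \frac{\partial f}{\partial s}(s,x)\,\D s,
\end{equation*}
and substituting inside the $\D[F^{(T)}]$ integral splits it as
\begin{equation*}
\int_0^T f(t,F^{(T)}_t)\,\D[F^{(T)}]_t=\int_0^T f(T,F^{(T)}_t)\,\D[F^{(T)}]_t-\int_0^T\!\!\int_t^T \frac{\partial f}{\partial s}(s,F^{(T)}_t)\,\D s\,\D[F^{(T)}]_t.
\end{equation*}

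Taking $\Q$-expectation, the first summand is handled directly by \eqref{Eq:expectation2} at maturity $T$ with $g(\cdot)=f(T,\cdot)$, producing the terminal term $2\e^{rT}\int_0^\infty f(T,k)\phi^{(T)}(S_0,k)\,\D k$. For the second summand I would swap the order of integration by Fubini to obtain $\int_0^T \int_0^s \frac{\partial f}{\partial s}(s,F^{(T)}_t)\,\D[F^{(T)}]_t\,\D s$, and for each fixed $s$ rewrite the inner integral in terms of $F^{(s)}$ using the simple scaling $F^{(T)}_t=\e^{r(T-s)}F^{(s)}_t$ (hence $\D[F^{(T)}]_t=\e^{2r(T-s)}\D[F^{(s)}]_t$ on $[0,s]$). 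Setting $g_s(x):=\e^{2r(T-s)}\frac{\partial f}{\partial s}(s,\e^{r(T-s)}x)$, the inner expectation becomes $\E^{\Q}[\int_0^s g_s(F^{(s)}_u)\,\D[F^{(s)}]_u]$, to which \eqref{Eq:expectation2} at maturity $s$ delivers $2\e^{r(2T-s)}\int_0^\infty \frac{\partial f}{\partial s}(s,\e^{r(T-s)}k)\phi^{(s)}(S_0,k)\,\D k$. Integrating in $s$ and carrying the overall minus sign through reproduces the double-integral term on the right-hand side (after renaming $s\to t$).

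The main obstacle will be the joint justification of Fubini and of applying \eqref{Eq:expectation2} to the shifted integrand $g_s$. The latter requires the stochastic integral $\int_0^s \bigl(G_s(F^{(s)}_0)-G_s(F^{(s)}_u)\bigr)\,\D F^{(s)}_u$, where $G_s$ is an $x$-antiderivative of $g_s$, to be a $\Q$-martingale for every $s\in[0,T]$; after undoing the scaling this is essentially the martingale condition of Assumption~\ref{Assum:1} applied to $\partial\psi/\partial t$ in place of $\psi$ and with intermediate maturity $s$ in place of $T$, which the $L^2_{\Q,[F^{(s)}]}$ criterion stated in Assumption~\ref{Assum:1} secures once the relevant square integrability is assumed. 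Provided those integrability bounds are uniform enough in $s$ to legitimize the Fubini exchange, the two applications of \eqref{Eq:expectation2} furnish exactly the two terms on the right-hand side and the identity follows.
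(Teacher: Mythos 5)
Your proposal is correct, but it proves the theorem by a genuinely different route than the paper. The paper works at the level of the second antiderivative $\psi$: it compares Meyer--It\^{o}'s formula for $\psi\left(t,F^{(T)}_t\right)$ on $[0,T]$ with the Taylor expansion of $\psi\left(T,\cdot\right)$ around $F^{(T)}_0$, isolates $\frac12\int_0^T f\left(t,F^{(T)}_t\right)\D[F^{(T)}]_t$, and then disposes of the residual time integral $\int_0^T\bigl(\frac{\partial\psi}{\partial t}(t,F^{(T)}_0)-\frac{\partial\psi}{\partial t}(t,F^{(T)}_t)\bigr)\D t$ by a second, purely pathwise Taylor expansion in $x$ whose expectation (Eq.~\eqref{Eq:Expect_dt}) needs only the martingale property of $F^{(T)}$ itself and Fubini on nonnegative payoffs. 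You instead work directly on $f$: the fundamental theorem of calculus in $t$, a Fubini swap, and two invocations of the single-maturity identity \eqref{Eq:expectation1}/\eqref{Eq:expectation2} --- once at maturity $T$ and once at each intermediate maturity $s$. Your decomposition is more modular and makes the structure of the answer transparent (the terminal term is the static contribution of $f(T,\cdot)$, the double integral accumulates the time-variation of $f$), and it reuses the cited one-variable lemma as a black box. What it costs you is a somewhat heavier set of technical hypotheses than the paper literally invokes: you need the martingale property for a whole family of stochastic integrals $\int_0^s\bigl(G_s(F^{(s)}_0)-G_s(F^{(s)}_u)\bigr)\D F^{(s)}_u$ indexed by $s\in[0,T]$ (Assumption~\ref{Assum:1} states the condition only for $\partial\psi/\partial x$ against $F^{(T)}$), absolute continuity of $f(\cdot,x)$ in $t$ to justify $f(t,x)=f(T,x)-\int_t^T\frac{\partial f}{\partial s}(s,x)\,\D s$, and a Fubini exchange in $(s,\omega)$ --- all of which you flag, and all of which are in the same spirit as (but not identical to) the paper's stated assumptions. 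The paper's route avoids the family of intermediate-maturity martingale conditions entirely because its treatment of the $\D t$ term never produces a stochastic integral. Both arguments are sound at the paper's level of rigor and land on the same formula.
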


\begin{proof}
By Taylor's theorem
\begin{align}
\begin{aligned}
\psi\left(T, F^{(T)}_T\right) ={}& \psi\left(T, F^{(T)}_0\right) + \frac{\partial \psi}{\partial x}\left(T,F^{(T)}_0\right)\left(F^{(T)}_T-F^{(T)}_0\right) \\
&+ \int_0^{F^{(T)}_0} \frac{\partial^2 \psi}{\partial x^2}(T,k)\left(k-F^{(T)}_T\right)^+ \D k \\
&+ \int_{F^{(T)}_0}^{\infty}\frac{\partial^2 \psi}{\partial x^2}(T,k)\left(F^{(T)}_T-k\right)^+ \D k.\label{Eq:Taylor_psi}
\end{aligned}
\end{align} 
In addition, according to Meyer-It\'{o}'s formula \citep{Protter},
\begin{align}
\begin{aligned}
\psi\left(T,F^{(T)}_T\right) ={}& \psi\left(0,F^{(T)}_0\right) + \int_0^T \frac{\partial \psi}{\partial t}\left(t, F^{(T)}_t\right)\D t + \int_0^T \frac{\partial \psi}{\partial x}\left(t,F^{(T)}_t\right) \D F^{(T)}_t \\
&+ \frac{1}{2}\int_0^T \frac{\partial^2 \psi}{\partial x^2}\left(t,F^{(T)}_t\right)\D [F^{(T)}]_t.
\end{aligned}
\end{align}
By combining the above results, we have
\begin{align}
\begin{aligned}
\frac{1}{2}\int_0^T \frac{\partial^2 \psi}{\partial x^2}&\left(t,F^{(T)}_t\right)\D [F^{(T)}]_t = \int_0^T \left(\frac{\partial \psi}{\partial t}\left(t,F^{(T)}_0\right) - \frac{\partial \psi}{\partial t}\left(t,F^{(T)}_t\right) \right)\D t \\
&+ \int_0^T \left( \frac{\partial \psi}{\partial x}\left(T,F^{(T)}_0\right) - \frac{\partial \psi}{\partial x}\left(t,F^{(T)}_t\right) \right) \D F^{(T)}_t \\
&+ \int_0^{F^{(T)}_0} \frac{\partial^2 \psi}{\partial x^2}(T,k)\left(k-F^{(T)}_T\right)^+ \D k \\
&+ \int_{F^{(T)}_0}^{\infty}\frac{\partial^2 \psi}{\partial x^2}(T,k)\left(F^{(T)}_T-k\right)^+ \D k.\label{Eq:thm1}
\end{aligned}
\end{align}

To deal with the integration term with respect to $t$ of the r.h.s. of the above equation, recall Taylor's theorem
\begin{align}
\begin{aligned}
 \frac{\partial \psi}{\partial t}\left(t,F^{(T)}_0\right) -& \frac{\partial \psi}{\partial t}\left(t,F^{(T)}_t\right) = - \frac{\partial^2 \psi}{\partial t \partial x}\left(t, F_0^{(T)}\right)\left( F_t^{(T)} - F_0^{(T)} \right) \\
&- \int_0^{F^{(T)}_0} \frac{\partial^3 \psi}{\partial t \partial x^2}(t,k')\left(k'-F_t^{(T)}\right)^+\D k'  \\
&- \int_{F^{(T)}_0}^{\infty} \frac{\partial^3 \psi}{\partial t \partial x^2}(t,k')\left(F_t^{(T)}-k'\right)^+\D k'.
\end{aligned}
\end{align}
Putting $k = k'\e^{-r(T-t)}$ and taking the $\Q$-expectation on both sides, we obtain
\begin{align}
\begin{aligned}
\E^{\Q}&\left[ \frac{\partial \psi}{\partial t}\left(t,F^{(T)}_0\right) - \frac{\partial \psi}{\partial t}\left(t,F^{(T)}_t\right) \right] \\
&= - \e^{r(2T-t)}\int_0^\infty \frac{\partial f}{\partial t}\left(t,\e^{r(T-t)}k\right) \phi^{(t)}(S_0,k) \D k. \label{Eq:Expect_dt}
\end{aligned}
\end{align}

Therefore, by taking $\Q$-expectation on Eq.~\eqref{Eq:thm1} and noting that the stochastic integration with respect to $F_t^{(T)}$ (the second term of the r.h.s. of the equation) is a $\Q$-martingale, we have
\begin{align}
\begin{aligned}
\E^{\Q} &\left[ \int_0^T f\left(t,F^{(T)}_t\right)\D [F^{(T)}]_t \right] \\
=& - 2\int_0^T \int_0^\infty \e^{r(2T-t)}\frac{\partial f}{\partial t}\left(t,\e^{r(T-t)}k\right)\phi^{(t)}(S_0,k) \D k \D t \\
&+ 2\e^{rT}\int_0^\infty f(T,k) \phi^{(T)}(S_0,k) \D k.
\end{aligned}
\end{align}
Finally, $F_t^{(T)} = \e^{r(T-t)}S_t $ and $\D[F^{(T)}]_t = \e^{2r(T-t)}\D[S]_t$ is used.
\end{proof}

In the above theorem, the expected liquidity cost for a delta hedging process is represented by a Riemann integration formula.
By putting
\begin{equation}
f(t,x) = \e^{-3r(T-t)}x \left(\frac{\partial D}{\partial  x}\left(t,\e^{-r(T-t)}x\right) \right)^2, \label{Eq:f}
\end{equation}
and applying the above theorem, 
\begin{align}
\begin{aligned}
\E^{\Q} &\left[\int_0^T \alpha S_t \D [D]_t  \right] =\E^{\Q} \left[\int_0^T \alpha S_t \left(\frac{\partial D(t,S_t)}{\partial x} \right)^2 \D [S]_t \right] \\
={}&\alpha\E^{\Q} \left[ \int_0^T \e^{2r(T-t)}f\left(t, \e^{r(T-t)}S_t \right)\D [S]_t \right]\\
={}& - 2\alpha\int_0^T \int_0^\infty \e^{r(2T-t)}\frac{\partial f}{\partial t}\left(t,\e^{r(T-t)}k\right)\phi^{(t)}(S_0,k) \D k \D t \\
&+ 2\alpha\e^{rT}\int_0^\infty f(T,k) \phi^{(T)}(S_0,k) \D k. \label{Eq:BSELC}
\end{aligned}
\end{align}
The expected liquidity increases linearly with $\alpha$ and quadratically with the number of options being hedged.
The expected liquidity cost can be conveniently represented in the following way.

\begin{theorem}\label{Thm:unit}
Consider the delta hedging process of $N$ numbers of European options with a strike $K$, maturity $T$, and current underlying price $S_0$.
Let $M=K/S_0$ and define the unit liquidity cost by
\begin{align}
\begin{aligned}
I ={}& - 2\int_0^T \int_0^\infty \e^{r(2T-t)}\frac{\partial f}{\partial t}\left(t,\e^{r(T-t)}k;M,T\right)\phi^{(t)}(1,k) \D k \D t \\
&+ 2\e^{rT}\int_0^\infty f(T,k;M,T) \phi^{(T)}(1,k) \D k
\end{aligned}
\end{align}
where
\begin{equation}
f(t,x;M,T) = \e^{-3r(T-t)}x \left(\frac{\partial \bar D}{\partial  x}\left(t,\e^{-r(T-t)}x;M,T\right) \right)^2 
\label{Eq:f2}
\end{equation}
and $\bar D(t,x;M,T)$ is the delta function of the European option with a strike $M$, maturity $T$, and current underlying price 1;
in other words, the values are denominated by the current underlying stock price, $S_0$.
The expected liquidity cost of the delta hedging process of $D=N\bar D$ can then be expressed as 
\begin{equation}
\E^{\Q} \left[ \int_0^T \alpha S_t \D [D]_t \right]=\alpha N^2 S_0 I.
\end{equation}
\end{theorem}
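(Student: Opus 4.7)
The plan is to reduce Theorem~\ref{Thm:unit} to Theorem~\ref{Thm:implied_liquidity} via two observations: the quadratic variation extracts an overall $N^2$, and the risk-neutral law of $S/S_0$ is scale-invariant (consistent with the Black--Scholes framework implicit in the paper). First, write $D = N\hat D$ with $\hat D(t,x;K,T)$ the delta of a single European option with strike $K$, maturity $T$, and initial price $S_0$. Then $\partial_x D = N\partial_x \hat D$, hence $\D[D]_t = N^2 \D[\hat D]_t$. Applying Theorem~\ref{Thm:implied_liquidity} with
\[
f_K(t,x) = \e^{-3r(T-t)}\,x\,\bigl(\partial_x \hat D(t,\e^{-r(T-t)}x;K,T)\bigr)^2,
\]
that is Eq.~\eqref{Eq:f} for $\hat D$, expresses $\E^\Q[\int_0^T \alpha S_t\D[D]_t]$ as $\alpha N^2$ times the right-hand side of Eq.~\eqref{Eq:BSELC}.

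Second, I invoke scaling. Under scale-invariance of the $\Q$-dynamics of $S$, European option prices are degree-one homogeneous in (spot, strike),
\[
\phi^{(t)}(S_0,k) = S_0\,\phi^{(t)}(1,k/S_0),
\]
and the delta is degree-zero, $\hat D(t,x;K,T) = \bar D(t,x/S_0;M,T)$ with $M=K/S_0$. Differentiating in $x$ produces a factor $S_0^{-1}$, and substituting into the definition of $f_K$ yields the key scaling identity
\[
f_K(t,x) = S_0^{-1}\,f(t,x/S_0;M,T),
\]
where $f(\cdot\,;M,T)$ is the function from Eq.~\eqref{Eq:f2}; the partial $\partial_t$ commutes with this relation because the identity holds pointwise in $t$.

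Finally, I change variables $k = S_0\tilde k$ in each of the two integrals on the right-hand side of Eq.~\eqref{Eq:BSELC}. Each integral picks up three factors of $S_0$: one from the Jacobian $\D k = S_0\,\D\tilde k$, one from the homogeneity of $\phi^{(\cdot)}(S_0,\,\cdot\,)$, and a compensating $S_0^{-1}$ from the scaling of $f_K$. The net prefactor per integral is a single $S_0$, which factors out to give exactly $\alpha N^2 S_0\,I$ with $I$ as defined in the statement.

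The main obstacle is conceptual rather than computational: neither Theorem~\ref{Thm:implied_liquidity} nor Assumption~\ref{Assum:1} by itself forces the delta to depend only on moneyness. The reduction to a \emph{unit} liquidity cost genuinely requires scale-invariance of the risk-neutral law of $S$, which is automatic for geometric Brownian motion (and more generally for any diffusion with coefficients homogeneous of degree one in $S$). Once this is granted, the rest of the argument is bookkeeping of powers of $S_0$ under a linear change of variable.
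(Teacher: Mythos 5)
Your proof is correct and rests on the same two ingredients as the paper's one-line argument: the factor $N^2$ extracted from the quadratic variation, and the extraction of $S_0$ by homogeneity. The only real difference is the order of operations. The paper normalizes upstream: it sets $\bar S_t = S_t/S_0$, writes $\E^{\Q}[\int_0^T \alpha S_t \D [D]_t] = \alpha N^2 S_0 \E^{\Q}[\int_0^T \bar S_t \D [\bar D]_t]$, and applies Theorem~\ref{Thm:implied_liquidity} directly to the normalized process started at $1$. You normalize downstream: apply Theorem~\ref{Thm:implied_liquidity} to $S$ itself, then rescale the two deterministic integrals via $k = S_0\tilde k$; your power-of-$S_0$ bookkeeping ($S_0$ from the Jacobian, $S_0$ from the homogeneity of $\phi$, $S_0^{-1}$ from the scaling of $f_K$) checks out. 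One remark on your closing caveat: the paper's route does not actually require scale-invariance of the $\Q$-law of $S$, because $\phi^{(t)}(1,k)$ can be read as the price of an option written on the normalized asset $S/S_0$ itself, for which $\phi^{(t)}(1,k) = \phi^{(t)}(S_0, S_0 k)/S_0$ holds exactly by linearity of the payoff, and $\bar D$ is just the chosen hedging delta re-expressed as a function of $S_t/S_0$ (the theorem posits $D = N\bar D$ in exactly this form). Scale-invariance of the dynamics only becomes relevant if one insists on computing $\phi^{(t)}(1,\cdot)$ and $\bar D$ from the model's pricing function evaluated at spot $1$ --- which is indeed how the paper's numerics proceed, under Black--Scholes, where the property holds.
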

\begin{proof}
Let $\bar S_t = S_t /S_0$.
Then
\begin{align}
\E^{\Q} \left[ \int_0^T \alpha S_t \D [D]_t \right] = \alpha N^2 S_0 \E^{\Q} \left[ \int_0^T \alpha \bar S_t \D [\bar D]_t \right]
\end{align}
and use Theorem~\ref{Thm:implied_liquidity}.
\end{proof}

\begin{example}
Suppose that the Black-Scholes delta function is used to hedge the option.
As an exemplary case, the trading strategy is the trader's own choice and does not need to be an optimal replication.
The delta process of the European call option with strike $K$ and maturity $T$ is
\begin{equation}
D(t,S_t) = \Phi(h_t), \quad h_t = \frac{\log\frac{S_t}{K} + \left(r + \frac{\sigma^2}{2}\right)(T-t)}{\sigma\sqrt{T-t}},
\end{equation}
for $0\leq t<T$, where $\Phi$ denotes the standard cumulative normal distribution.
Therefore, the gamma of the option $x=S_t$ is
\begin{equation}
\frac{\partial D(t,x)}{\partial x} = \frac{1}{x\sigma\sqrt{2\pi(T-t)}}\exp \left[ -\frac{1}{2}\left\{\frac{\log\frac{x}{K} + \left(r + \frac{\sigma^2}{2} \right) (T-t)}{\sigma\sqrt{T-t}} \right\}^2\right]
\end{equation}
and
\begin{equation}
f(t,x) = \frac{1}{2\pi x\sigma^2(T-t)} \exp\left[-r(T-t)- \left\{\frac{\log\frac{x}{K} + \frac{\sigma^2}{2}(T-t)}{\sigma\sqrt{T-t}} \right\}^2\right].
\end{equation}
and
\begin{align}
\begin{aligned}
\frac{\partial f}{\partial t}(t,x) ={}&\frac{\sigma^2 \left\{4+4 r (T-t)+\sigma^2 (T-t)\right\} (T-t)-4 \log^2\frac{x}{K}}{8 \pi \sigma^4 (T-t)^3 x}\\
&\times\exp \left[ -r(T-t) - \left\{ \frac{\log\frac{x}{K} + \frac{\sigma^2}{2}(T-t)}{\sigma\sqrt{T-t}} \right\}^2 \right].\label{Eq:BSft}
\end{aligned}
\end{align}
For the above formula to be valid, we assume that there are no additional costs such as transaction costs.
With an infinite variation hedging strategy, the theoretical super-replication price of an option is explained in 
\cite{Cetin2010}.
\end{example}

\begin{remark}
To represent the expected liquidity cost in terms of the European option prices, the $\Q$-expectation is taken as the liquidity cost in this study.
On the other hand, for example, under the Black-Scholes framework (i.e., if the underlying stock price process follows geometric Brownian motion), one can simply change $r$ as the drift of the stock price process in order to derive the expectation in terms of physical probability.
\end{remark}

\section{Expected liquidity cost under discrete trading}\label{Sect:disc}
The previous section considers the expected liquidity cost under the continuous trading assumption (with an infinite variation trading strategy).
However, the application of continuous trading is not practical here.
This section examines the expected liquidity cost under the discrete time trading of the delta hedging process.
The methodology is similar to that of the continuous case.

In Figure~\ref{Fig:DSC}, an example of the supply curve is provided and its linear approximations plotted.
The straight line represents the first-order approximation around zero, and was used to calculate the expected liquidity cost under continuous time trading in the previous section.
For discrete time trading, assume that the hedger rebalances the portfolio when the necessary change in delta exceeds a certain fixed quantity, for example, 3,000 shares.
According to the given supply curve in the figure, the first-order approximation of the supply curve for the discrete trading (the dashed line in the figure) is different from the one of continuous trading.
Therefore, the liquidity supply curve parameter, $\alpha'$, is used for discrete trading, which may differ from $\alpha$.

The slope of the supply curve is assumed to be symmetrical as \cite{Biais1995}, \cite{Bouchaud2002}, and \cite{Potters2003} reported no significant difference between the mean bid and ask side depths of the outstanding limit orders.
In addition, studies have also reported on the asymmetrical curve shape; see \cite{Gu2008} and \cite{Malo}.

\begin{figure}
\centering
\includegraphics[width=0.45\textwidth]{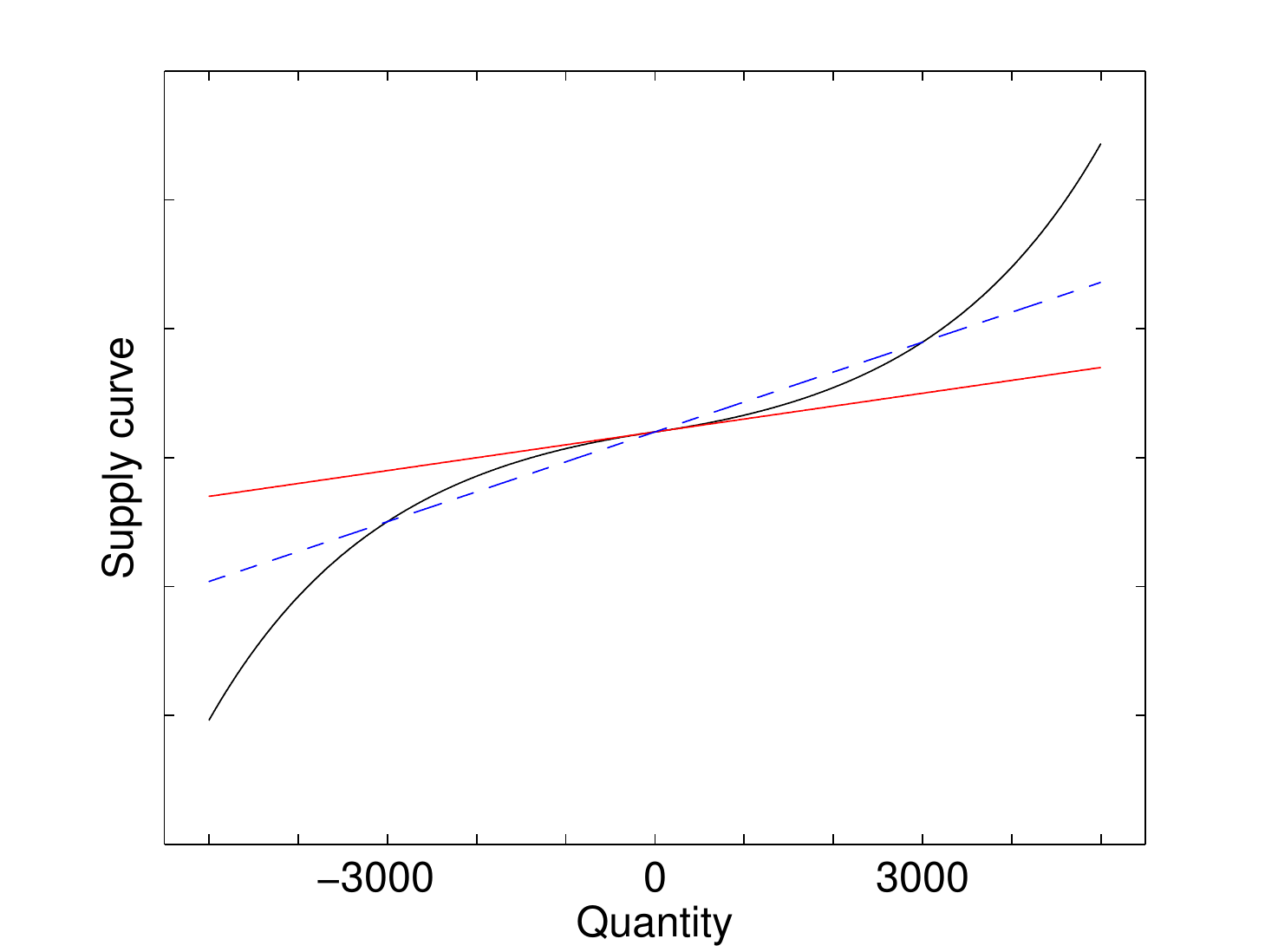}
\caption{Example of the supply curve and first-order approximation for continuous trading (straight line) with a slope $\alpha$, and for discrete trading (dashed line) with slope $\alpha’$}\label{Fig:DSC}
\end{figure}

In addition, it is assumed that the limit order book is perfectly resilient and that the supply curve is exogenous.
If the hedger buys or sells certain amounts of underlying stock to rebalance the delta, which results in a depletion of existing limit orders,
then the limit orders are replenished immediately so that there are no effects on the following trades or the other investors’ trades.

With a delta process, $D(t,S_t)$, the discrete trading liquidity cost term in Eq.~\eqref{Eq:LC} is rewritten as
\begin{equation}
\sum_{0 < u \leq t}\Delta D(u,S_{u}) [S(u,\Delta D(u,S_u))-S_{u}] = \sum_{0 < u \leq t}\alpha'S_u (\Delta D(u,S_u))^2
\end{equation}
where $\alpha' S_u \Delta D(u,S_u)$ is a linear form of $S(u,\Delta D(u,S_u))-S_{u}$.
Note that the above is not an approximation, but exactly equivalent under the discrete trading strategy with a fixed $\Delta D$ and supply curve.
If $\Delta D$ or the supply curve is not fixed, then the liquidity cost cannot be represented by a single parameter $\alpha'$.
In addition, by taking the first-order approximation of a delta function, the liquidity cost can be approximated as
\begin{equation}
\sum_{0 < u \leq t}\alpha'S_u \left(\frac{\partial D(u,S_u)}{\partial x}\right)^2(\Delta S_u)^2 \label{Eq:LC_disc}
\end{equation}
and our goal is to compute the expectation of the above equation.

To treat discrete trading, it is convenient to consider a pure jump process version of the underlying price and the corresponding futures prices (even though the actual price process is continuous) in the following way.
Let $0 = \tau_0 < \cdots < \tau_i < \tau_{i+1} < \cdots $ be the stopping time such that each $\tau_i$ is the rebalancing time of the delta hedging portfolio, by changing the delta with some fixed $\Delta D$.
In addition, let $\bar F^{(t)}$ be the pure jump version of the futures process $F^{(t)}$, such that
$ \bar F^{(t)}_{\tau_i} = F^{(t)}_{\tau_i} $ and $\bar F^{(t)}_u = \bar F^{(t)}_{\tau_i}$ for $ \tau_i < u < \tau_{i+1}$.
In particular, in $\bar F^{(T)}$, with the maturity $T$ of the option being hedged, let $\bar F^{(T)}_T = F^{(T)}_T$.

Suppose that $f(t,x)$ and $\psi(t,x)$ satisfy the assumption~\ref{Assum:1};
then, according to It\`{o}'s formula for the pure jump process, $\bar F^{(T)}$,
\begin{equation}
\psi\left(T,\bar F^{(T)}_T \right) = \psi\left(0,\bar F_0^{(T)}\right) + \int_0^T \frac{\partial \psi}{\partial t}\left(t, \bar F_t^{(T)}\right)\D t + \sum_{0<t\leq T} \Delta \psi\left(t, \bar F_t^{(T)}\right).
\end{equation}
Compared to the Taylor expansion in Eq.~\eqref{Eq:Taylor_psi}, we have
\begin{align}
\begin{aligned}
\sum_{0<t\leq T} \Delta \psi\left(t,\bar F_t^{(T)}\right) ={}& \int_0^T \left(\frac{\partial \psi}{\partial t}\left(t, \bar F^{(T)}_0\right) - \frac{\partial \psi}{\partial t}\left(t, \bar F^{(T)}_t\right) \right)\D t\\
&+ \int_0^{F_0} \frac{\partial^2 \psi}{\partial x^2}(T,k)\left(k- \bar F^{(T)}_T\right)^+ \D k \\
&+ \int_{F^{(T)}_0}^{\infty}\frac{\partial^2 \psi}{\partial x^2}(T,k)\left(\bar F^{(T)}_T-k\right)^+ \D k.
\end{aligned}
\end{align}

By Taylor’s expansion and Eq.~\eqref{Eq:Expect_dt}, 
\begin{align}
\begin{aligned}
&\E^{\Q}\left[ \sum_{0<t\leq T} \left\{ \frac{\partial \psi}{\partial x}\left(t,\bar F_{t-}^{(T)}\right)\Delta \bar F_t^{(T)} + \frac{1}{2}\frac{\partial^2 \psi}{\partial x^2}\left(t,\bar F_{t-}^{(T)}\right)\left(\Delta \bar F_t^{(T)}\right)^2 + \cdots \right\} \right]\label{Eq:expansion}\\
=&\E^{\Q}\left[ \sum_{0<t\leq T} \Delta \psi\left(t,\bar F_t^{(T)}\right) \right] \\
=& - \int_0^T \int_0^\infty \e^{r(2T-t)}\frac{\partial f}{\partial t}\left(t,\e^{r(T-t)}k\right)\phi^{(t)}(k) \D k \D t + \e^{rT}\int_0^\infty f(T,k) \phi^{(T)}(k) \D k.
\end{aligned}
\end{align}
Note that 
\begin{align}
\E^{\Q}\left[ \sum_{0<t\leq T} \frac{\partial \psi}{\partial x}\left(t, \bar F_{t-}^{(T)}\right)\Delta \bar F_t^{(T)}\right] = 0
\end{align}
Assuming that the higher-order terms of the jump in Eq.~\eqref{Eq:expansion} are negligible, 
the following approximation can be derived:
\begin{align}
\begin{aligned}
\E^{\Q}& \left[ \sum_{0<t\leq T}  \frac{\partial^2 \psi}{\partial x^2}\left(t, \bar F_{t-}^{(T)}\right)\left(\Delta \bar F_t^{(T)}\right)^2 \right] \\
\approx& - 2\int_0^T \int_0^\infty \e^{r(2T-t)}\frac{\partial f}{\partial t}\left(t,\e^{r(T-t)}k\right)\phi^{(t)}(k) \D k \D t \\
&+ 2\e^{rT}\int_0^\infty f(T,k) \phi^{(T)}(k) \D k.
\end{aligned}
\end{align}
or
\begin{align}
\begin{aligned}
\E^{\Q} &\left[ \sum_{0<t\leq T} \e^{2r(T-t)}f\left(t, \e^{r(T-t)}\bar S_{t-} \right) (\Delta \bar S_t)^2  \right] \\
\approx& - 2\int_0^T \int_0^\infty \e^{r(2T-t)}\frac{\partial f}{\partial t}\left(t,\e^{r(T-t)}k\right)\phi^{(t)}(k) \D k \D t\\
&+ 2\e^{rT}\int_0^\infty f(T,k) \phi^{(T)}(k) \D k.\label{Eq:D_ELC}
\end{aligned}
\end{align}
As in the continuous trading case, if
\begin{align}
f(t,x) =  \e^{-3r(T-t)}x \left(\frac{\partial D}{\partial  x}\left(t,\e^{-r(T-t)}x\right) \right)^2, 
\end{align}
is chosen,
then the expected liquidity cost can be approximated using the integration formula in Eq.~\eqref{Eq:D_ELC}.

We examine whether the higher terms in Eq.~\eqref{Eq:expansion} are negligible.
The third-order term is represented by
\begin{align}
\frac{1}{6}\frac{\partial^3 \psi}{\partial x^3}\left(t,\bar F_{t-}^{(T)}\right)\left(\Delta \bar F_t^{(T)}\right)^3 = \frac{1}{6} \left(\bar F_{t-}^{(T)}\right)^3\frac{\partial f}{\partial x}\left(t, \bar F_{t-}^{(T)}\right)\left(\frac{\Delta \bar F_t^{(T)}}{\bar F_{t-}^{(T)}}\right)^3.
\end{align}
Under the Black-Scholes framework, as in Eq.~\eqref{Eq:BSft}, $x^3 \frac{\partial f(t,x)}{\partial x}$ is continuous and converges to zero as $x$ goes to infinity for every $t$, and is hence bounded over $(0,\infty)$.
The same argument is applied to the other higher terms.
Therefore, the higher-order terms are represented by $O\left( \left(\Delta \bar F_t^{(T)}/\bar F_{t-}^{(T)}\right)^3 \right)$ and, in this study, it is assumed that the third moment of the return is negligible.
Figure~\ref{Fig:BSft} gives an example of $x^3 \frac{\partial f(t,x)}{\partial x}$ with 
$T = 0.1, t = 0, K = 1, r = 0.05, \sigma = 0.3, \alpha = 1$.

In summary, with discrete time trading of delta hedging, the expected liquidity cost can be calculated using the same method as in the continuous trading case, if the hedger rebalances the portfolio and the change in delta is equal to a fixed value.

\begin{figure}
\centering
\includegraphics[width=0.45\textwidth]{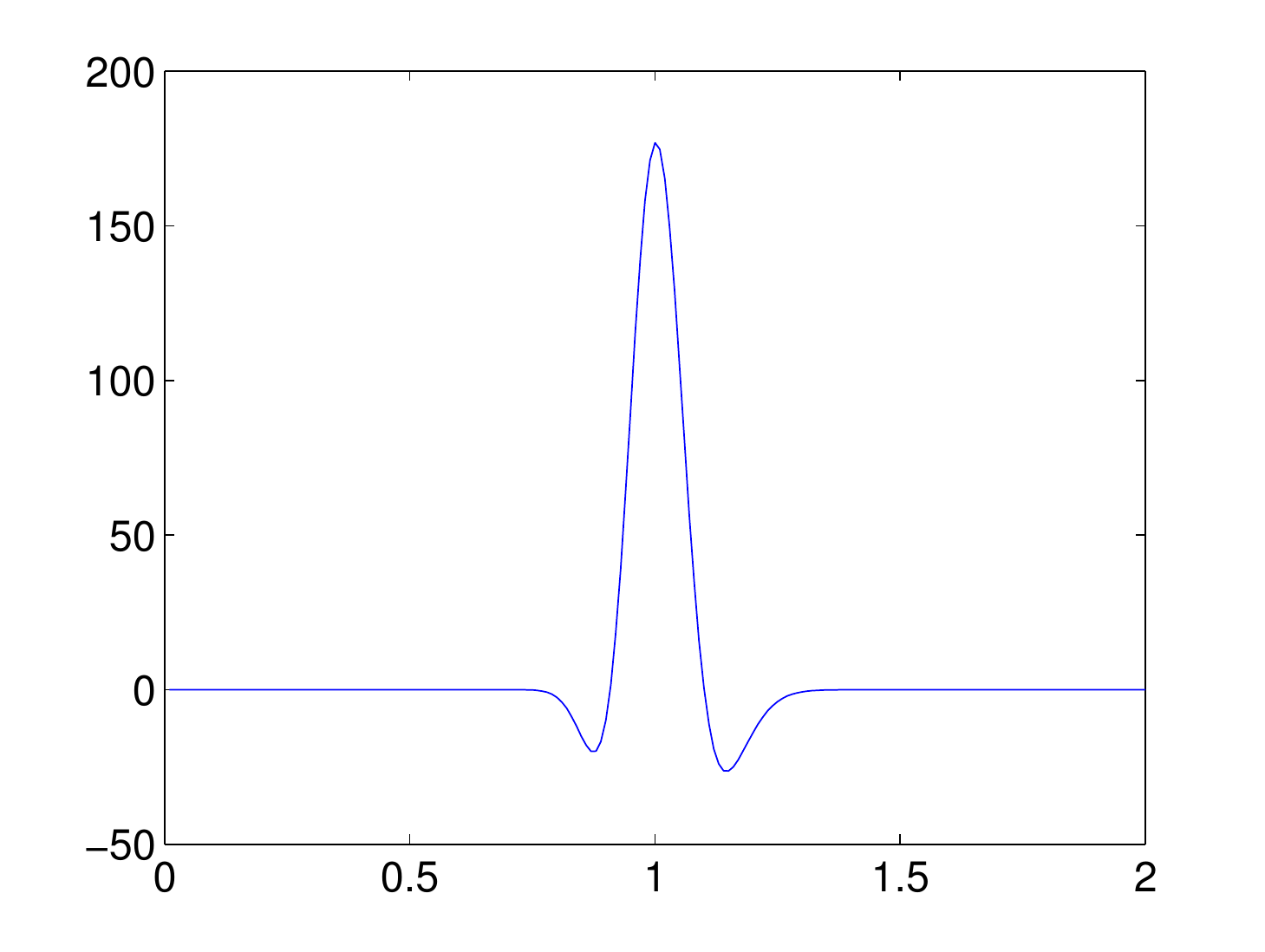}
\caption{Example of $x^3 \frac{\partial f(t,x)}{\partial x}$ with 
$T = 0.1, t = 0, K = 1, r = 0.05, \sigma = 0.3, \alpha = 1$}\label{Fig:BSft}
\end{figure}

\section{Numerical and simulation study}\label{Sect:numeric}

This section calculates the expected liquidity cost of the delta hedging process of European options using the above formulas and the Monte Carlo simulation.
The Black-Scholes delta and the Black-Scholes option prices are used for the hedging process and the weighted numerical integration, respectively.
According to Theorem~\ref{Thm:unit}, it is important to examine the property of the unit liquidity cost $I$ to calculate the expected liquidity cost of the delta hedging, because the cost is represented by $\alpha S_0 N^2 I$.

First, the numerical integration of $I$ was compared with the simulation results for the delta hedging processes of the European call options with maturity $T$ and strike price $K$.
This is because at time $t=T$, the delta function of a European call option is not continuous with respect to the underlying price;
hence, $f$ defined by Eq.~\eqref{Eq:f} or \eqref{Eq:f2} is also discontinuous.
For the numerical integration, the expected liquidity cost is approximated by taking $T'$ to be very close to $T$, where $f$ is continuous and differentiable with respect to $x$.
The approximation to the numerical integration of the unit liquidity cost is 
\begin{align}
\begin{aligned}
I \approx& - 2\int_0^{T'} \int_0^\infty \e^{r(2T' -t)}\frac{\partial f}{\partial t}\left(t,\e^{r(T'-t)}k;M,T\right)\phi^{(t)}(1,k) \D k \D t \\
&+ 2\e^{rT'}\int_0^\infty f(T',k;M,T) \phi^{(T')}(1,k) \D k.\label{Eq:ULC_appr}
\end{aligned}
\end{align}

For the numerical integration, this study used the approximation formula of Eq.~\eqref{Eq:ULC_appr} with $T'=T- 0.004$ (i.e., up to one day before maturity).
In the same way as the Monte Carlo simulation, the delta hedge process stops at $T'$.
Table~\ref{Table:comparison} lists the numerical integration of $I$ with various $T$ and $K$, and the corresponding simulation results are reported in the parentheses.
For the simulation and numerical integration, $\sigma=0.3, r=0.05$ ($S_0 =1$ and $\alpha =1$ are assumed implicitly).
A time grid $t = [0, T’]$ with interval size $\Delta t = 0.0001$, and a strike price grid $k = [0.5,1.5]$ with interval size $\Delta k = 0.001$, are constructed in the numerical integration.
For $\phi$, the Black-Scholes option prices are used.

In the first simulation, the delta hedging portfolio is rebalanced every hour, and the number of total sample paths is $10^{4}$; 
this procedure mimics continuous trading.
In the second simulation, a threshold is fixed to $\Delta D = 0.05$, and 
the portfolio is rebalanced whenever the change in the delta process exceeds the threshold.
Therefore, this procedure is for discrete trading.
The values reported in the table show that the numerical integrations of $I$ are similar to the means of corresponding simulation results within the entire range of $K$ and $T$.
The table also shows the 99\% confidence intervals and almost all confidence intervals calculated by simulation results contain the numerical integration values.

Second, the properties of the unit liquidity cost are examined by plotting various surfaces of $I$ with respect to the volatility, maturity, and moneyness.
Figures~\ref{Fig:ELC_sigma_T_K09}, \ref{Fig:ELC_sigma_T} and \ref{Fig:ELC_sigma_T_K11} plot the surface of $I$ as a function of $\sigma$ and $T$ with fixed $K=0.9, 1.0$, and $1.1$, respectively.
If a European call option is at-the-money (i.e., $K=1$), the unit expected liquidity cost, $I$, does not vary significantly throughout $\sigma$ and $T$, and values around 0.21 are obtained.
If the volatility and maturity are relatively small, then the expected liquidity cost increases 
with increasing volatility and maturity, which is consistent with our expectation.

In contrast to the previous case, if the volatility and maturity are relatively large, then the expected liquidity cost decreases as volatility and maturity increase.
The delta of a European option varies mostly around at-the-money, similar to the expected liquidity cost.
An at-the-money European option with a larger maturity and volatility has a higher probability of ending up out-of-the-money or in-the-money than an at-the-money European option with near maturity and low volatility.
Therefore, an at-the-money European option with large maturity and volatility might have a lower liquidity cost than an at-the-money European option with smaller maturity and less volatility.

In addition, although the European call option is not at-the-money (i.e., $K=0.9$ or $1.1$, if $\sigma$ is sufficiently large), then the unit expected liquidity cost, $I$, does not vary much, in the same way as maturity $T$.
As it is a European call option, delta is more sensitive closer to its expiration, and
the delta hedging liquidity cost is concentrated near the maturity (i.e., the liquidity cost when it is far from maturity is relatively insignificant).

Figure~\ref{Fig:ELC_K_T} plots the surface of $I$ as a function of $K$ and $T$ with fixed $\sigma = 0.3$.
The expected liquidity cost increases with increasing maturity for all values over $K$.
On the other hand, if a European option is at-the-money, the expected liquidity cost increases slowly compared to the cost of the in-the-money or out-of-the-money European options.
For in-the-money and out-of-the-money options, the expected liquidity cost increases significantly as the maturity $T$ increases.
In addition, the expected liquidity cost of the at-the-money option is generally larger than the cost of the in-the-money or out-of-the-money options.
The reason is that the moneyness of the deep in-the-money and out-of-the-money European options with a short time to maturity have little probability of changing;
therefore, the deltas of the options do not change significantly, as a small change in delta implies a low liquidity cost.

Figure~\ref{Fig:ELC_sigma_K} plots the surface of $I$ as a function of $\sigma$ and $K$ with fixed maturity $T=0.1$.
Similar to the previous cases, if a European option is at-the-money, the expected liquidity cost is not significantly affected by $\sigma$.
For an at-the-money European option, delta is sufficiently sensitive to the movement of the underlying price, regardless of the volatility level.
As a result, the liquidity costs of at-the-money European options are large for all $\sigma$.
For in-the-money and out-of-the-money options, the liquidity cost increases with increasing $\sigma$.
When there is more volatility, the moneyness is more likely to be altered from in-the-money to out-of-the-money, or vice versa.
In addition, the expected liquidity cost of the at-the-money option is larger than the cost of the in-the-money or out-of-the-money options.

\begin{table}
	\centering
	\caption{Comparison of the numerical integration $I$ and the corresponding simulation test with various $K$ and $T$ with $\sigma=0.3, r=0.05$ with mean values of simulations and the half-length of 99\% confidence interval (C.I.)}\label{Table:comparison}
	\begin{tabular}{ccccccc}
		\hline
		& & & & $K$\\
		\hline
		method & $T$ & 0.8 & 0.9 & 1.0 & 1.1 & 1.2 \\
		\hline 
		numerical & 0.1 & 0.0049 & 0.0791 & 0.2040 & 0.1195 & 0.0245\\
		simul. 1 &  & 0.0046 & 0.0775 & 0.2039 & 0.1213 & 0.0243\\
		C.I. & & $\pm 0.0006$ & $\pm 0.0024$ & $\pm 0.0030$ & $\pm 0.0033$ & $\pm 0.0017$\\
		simul. 2 &  & 0.0045 & 0.0772 & 0.2035 & 0.1207 & 0.0231\\
		C.I. & & $\pm 0.0006$ & $\pm 0.0024$ & $\pm 0.0030$ & $\pm 0.0033$ & $\pm 0.0017$ \\
		\hline
		numerical & 0.2 & 0.0283 & 0.1259 & 0.2165 & 0.1774 & 0.0844\\
		simul. 1 & & 0.0275 & 0.1268 & 0.2168 & 0.1793 & 0.0838\\
		C.I. & & $\pm 0.0016$ & $\pm 0.0032$ & $\pm 0.0035$ & $\pm 0.0039$ & $0.0033$ \\
		simul. 2 & & 0.0263 & 0.1260 & 0.2161 & 0.1787 & 0.0835 \\
		C.I. & & $\pm 0.0016$ & $\pm 0.0031$ & $\pm 0.0035$ & $\pm 0.0039$ & $0.0033$\\
		\hline
		numerical & 0.5 & 0.0834 & 0.1672 & 0.2259 & 0.2256 & 0.1797\\
		simul. 1 & & 0.0833 & 0.1674 & 0.2279 & 0.2228 & 0.1793  \\
		C.I. & & $\pm 0.0028$ & $\pm 0.0036$ & $\pm 0.0040$ & $ \pm 0.0043$ & $\pm 0.0046$ \\
		simul. 2 & & 0.0826 & 0.1669 & 0.2272 & 0.2223 & 0.1784 \\
		C.I. & & $\pm 0.0028$ & $\pm 0.0036$ & $\pm 0.0040$ & $ \pm 0.0043$ & $\pm 0.0046$ \\
		\hline
		numerical & 1.0 & 0.1194 & 0.1829 & 0.2297 & 0.2428 & 0.2293\\
		simul. 1 & & 0.1215 & 0.1821 & 0.2290 & 0.2423 & 0.2303 \\
		C.I. & & $\pm 0.0033$ & $\pm 0.0038$ & $\pm 0.0043$ & $\pm 0.0046$ & $\pm 0.0051$\\
		simul. 2 & & 0.1203 & 0.1811 & 0.2289 & 0.2412 & 0.2293 \\
		C.I. & & $\pm 0.0033$ & $\pm 0.0038$ & $\pm 0.0043$ & $\pm 0.0046$ & $\pm 0.0050$\\
		\hline
	\end{tabular}
\end{table}

%\begin{table}
%\centering
%\caption{Comparison of the numerical integration $I$ and the corresponding simulation test in the parentheses with various $K$ and $T$ with $\sigma=0.3, r=0.05$}\label{Table:comparison}
%\begin{tabular}{ccccccc}
%\hline
%& & & & $K$\\
%\hline
%method & $T$ & 0.8 & 0.9 & 1.0 & 1.1 & 1.2 \\
%\hline
%numerical & 0.1 & 0.0049 & 0.0788 & 0.2031 & 0.1190 & 0.0245\\
%simul. 1 &  & (0.0052) & (0.0786) & (0.2041) & (0.1208) & (0.0241)\\
%simul. 2 &  & (0.0052) & (0.0782) & (0.2035) & (0.1201) & (0.0231)\\
%numerical & 0.2 & 0.0282 & 0.1249 & 0.2147 & 0.1759 & 0.0838\\
%simul. 1 & & (0.0285) & (0.1266) & (0.2168) & (0.1745) & (0.0842)\\
%simul. 2 & & (0.0273) & (0.1259) & (0.2158) & (0.1740) & (0.0839)\\
%numerical & 0.5 & 0.0820 & 0.1642 & 0.2215 & 0.2213 & 0.1764\\
%simul. 1 & & (0.0842) & (0.1650) & (0.2241) & (0.2270) & (0.1801) \ \\
%simul. 2 & & (0.0834) & (0.1644) & (0.2230) & (0.2262) & (0.1796) \\
%numerical & 1.0 & 0.1155 & 0.1764 & 0.2195 & 0.2339 & 0.2209\\
%simul. 1 & & (0.1212) & (0.1829) & (0.2264) & (0.2424) & (0.2297) \\
%simul. 2 & & (0.1199) & (0.1818) & (0.2256) & (0.2420) & (0.2289) \\
%\hline
%\end{tabular}
%\end{table}

\begin{figure}
\centering
\includegraphics[width=0.6\textwidth]{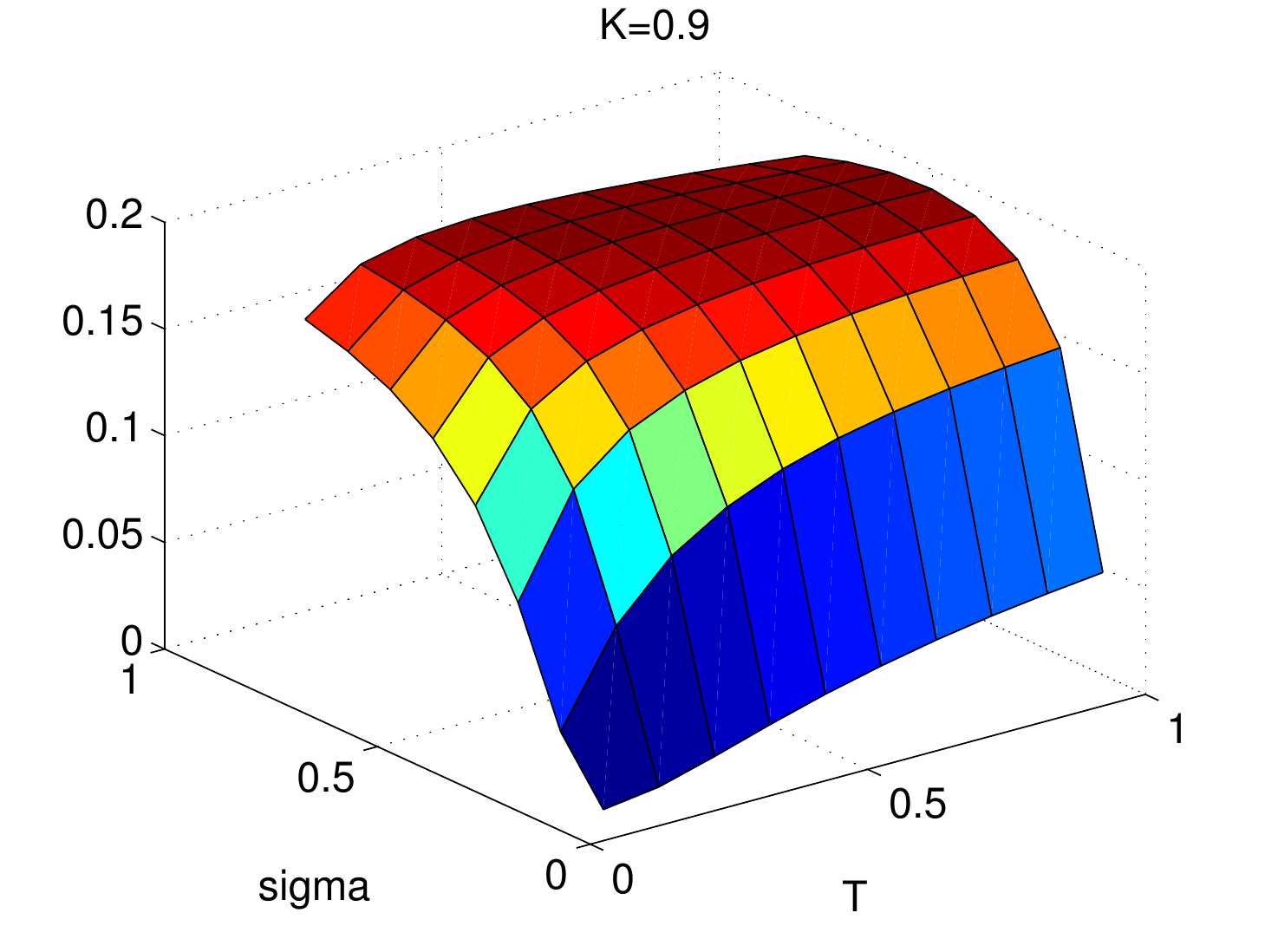}
\caption{Unit expected liquidity cost $I$ for various $\sigma$ and $T$ with $K=0.9$}\label{Fig:ELC_sigma_T_K09}
\end{figure}

\begin{figure}
\centering
\includegraphics[width=0.6\textwidth]{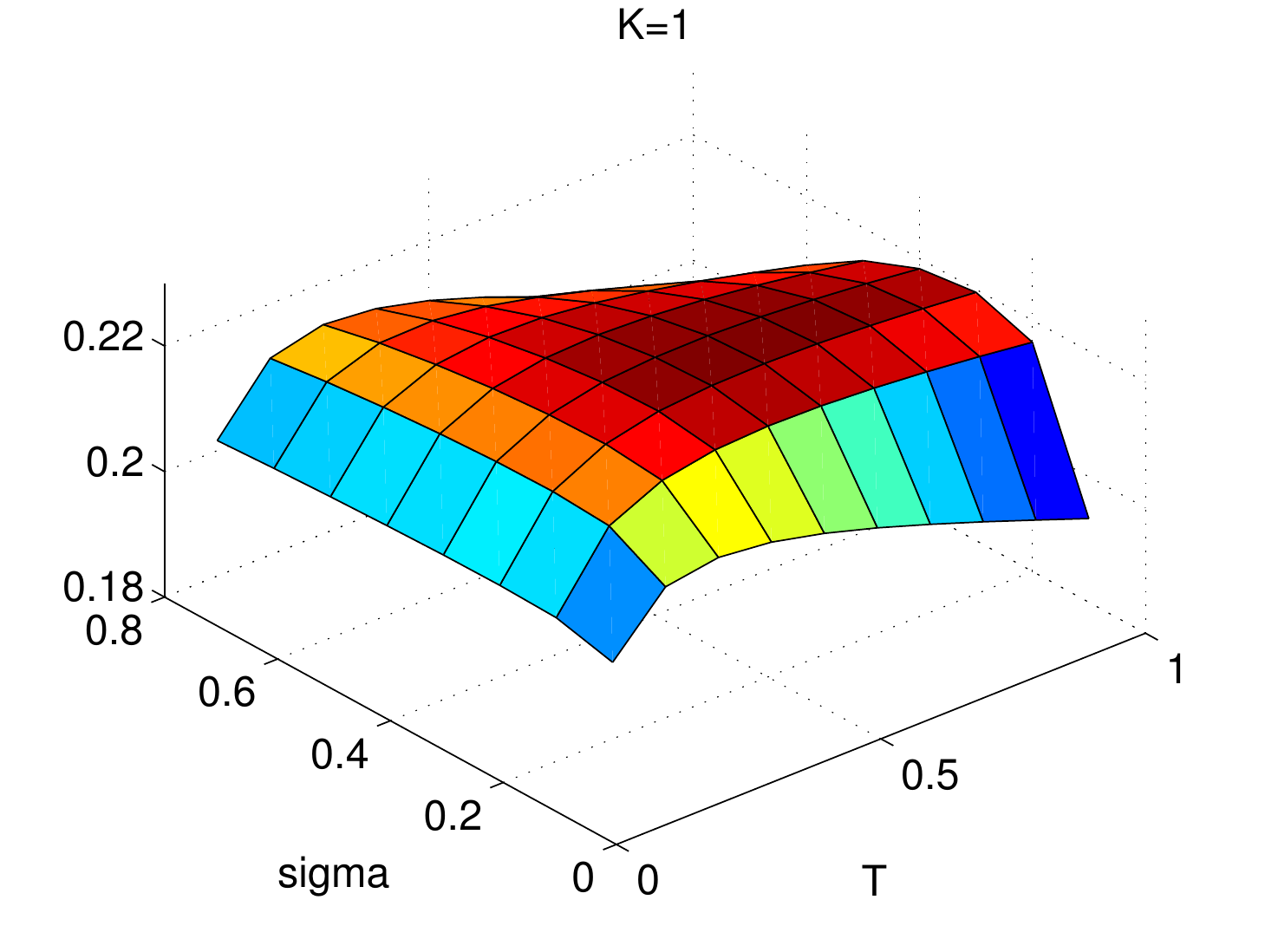}
\caption{Unit expected liquidity cost $I$ for various $\sigma$ and $T$ with $K=1$}\label{Fig:ELC_sigma_T}
\end{figure}

\begin{figure}
\centering
\includegraphics[width=0.6\textwidth]{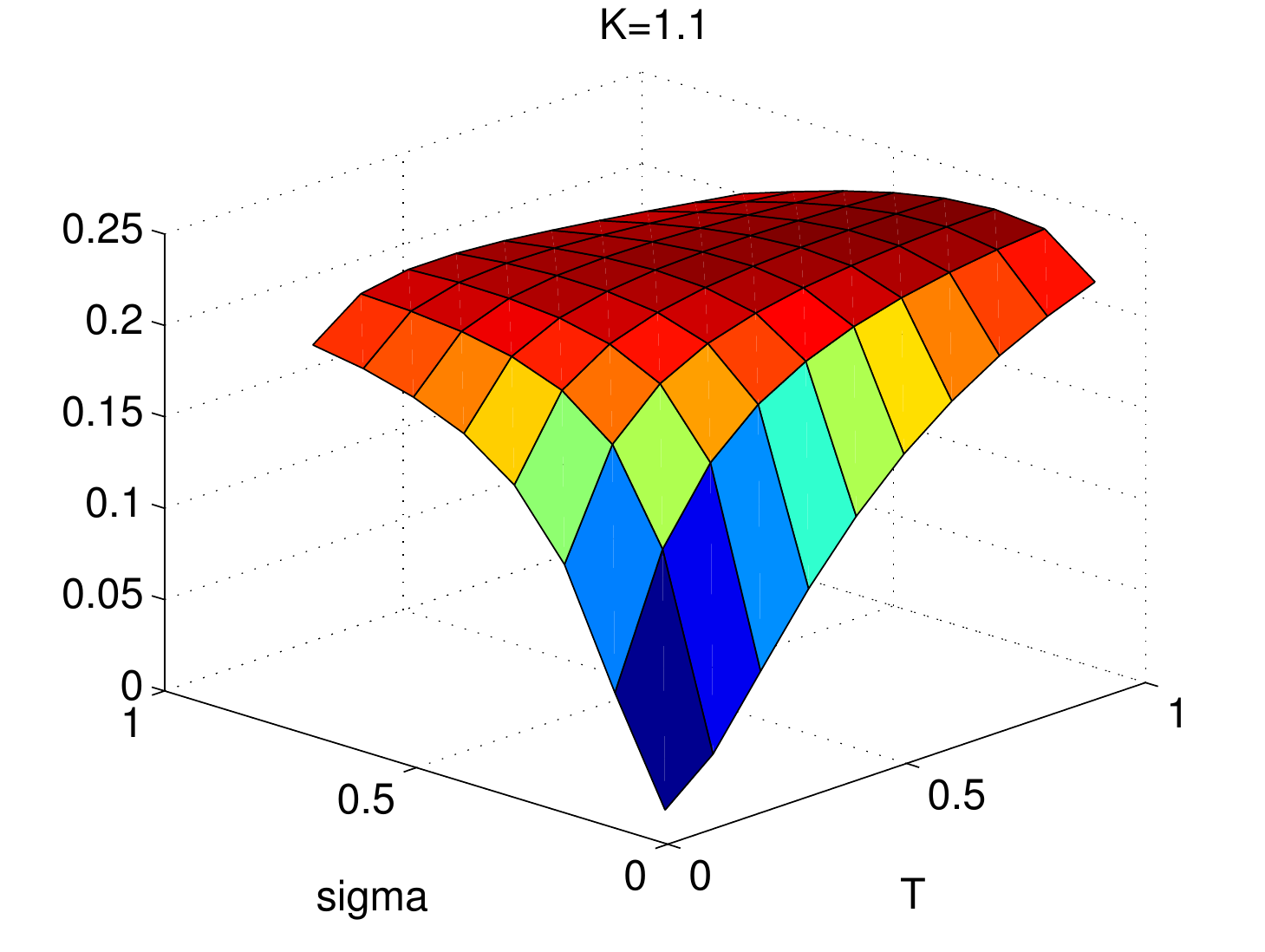}
\caption{Unit expected liquidity cost $I$ for various $\sigma$ and $T$ with $K=1.1$}\label{Fig:ELC_sigma_T_K11}
\end{figure}

\begin{figure}
\centering
\includegraphics[width=0.6\textwidth]{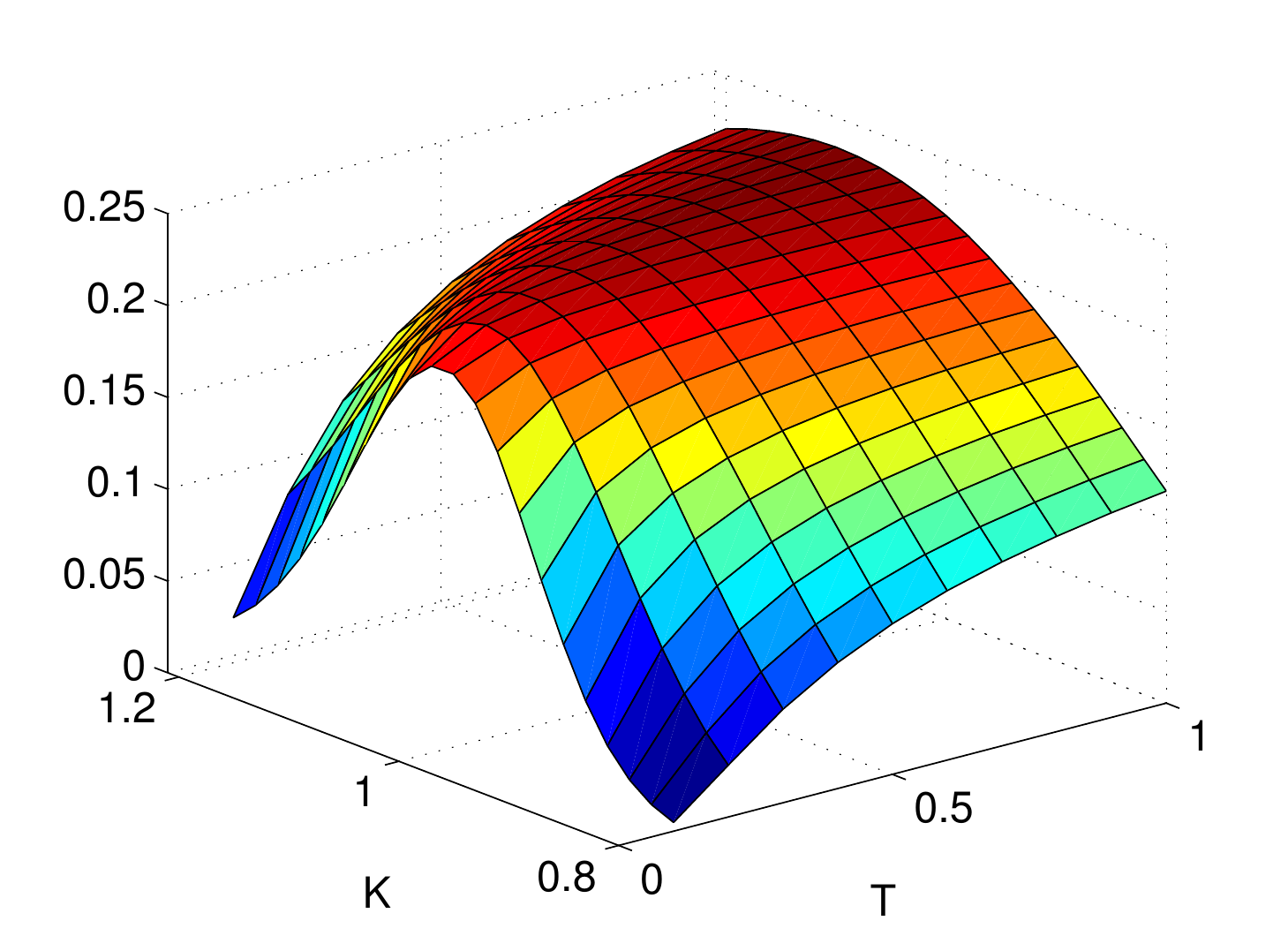}
\caption{Unit expected liquidity cost $I$ for various $K$ and $T$ with $\sigma=0.3$}\label{Fig:ELC_K_T}
\end{figure}

\begin{figure}
\centering
\includegraphics[width=0.6\textwidth]{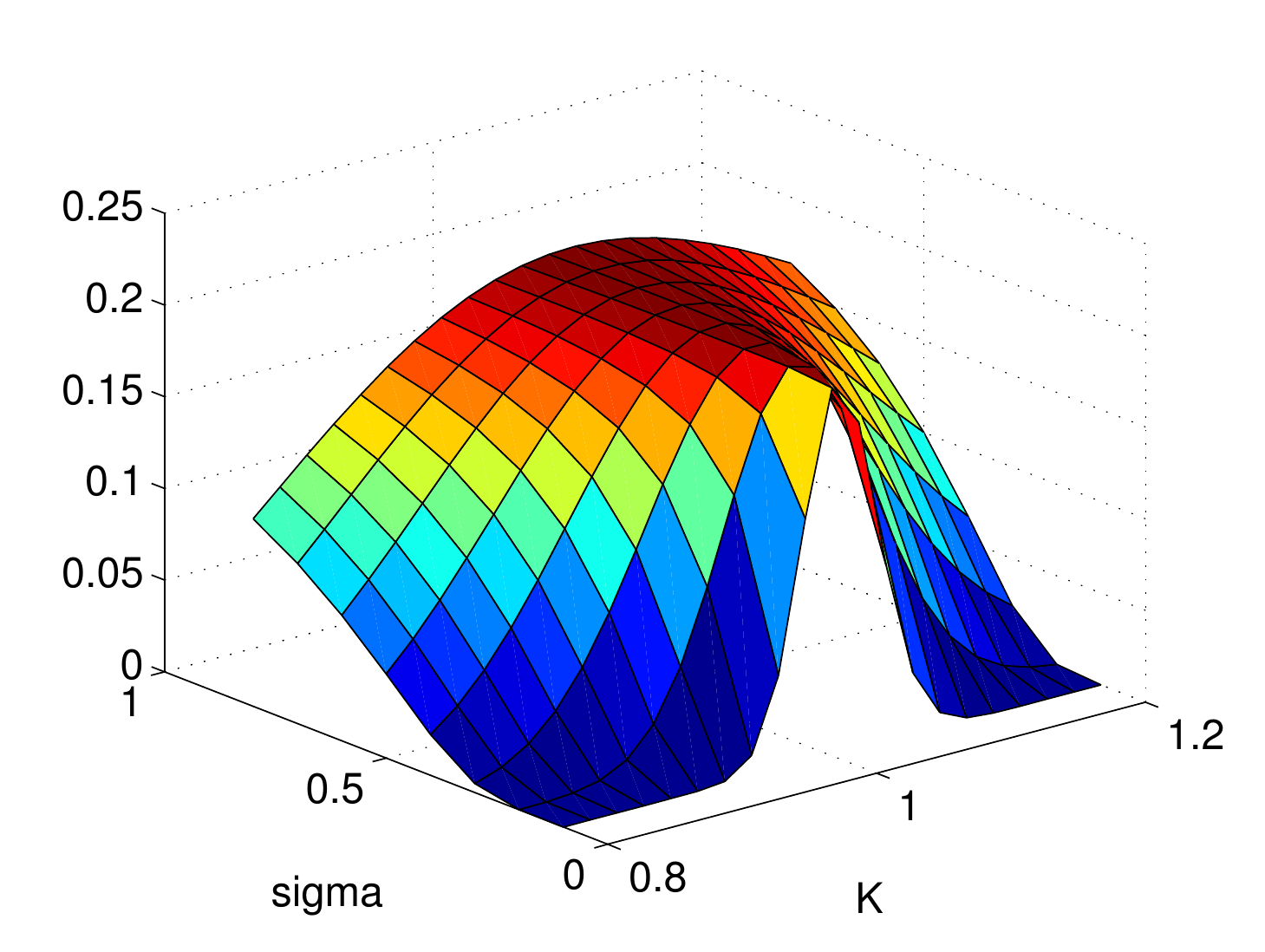}
\caption{Unit expected liquidity cost $I$ for various $\sigma$ and $K$ with $T=0.1$}\label{Fig:ELC_sigma_K}
\end{figure}

\section{Distribution of liquidity cost}\label{Sect:dist}
This section examines the distribution of the liquidity cost in the special case of discrete trading and linear supply curve.
Delta hedging is performed with a fixed time interval, and this setting is different from Section~\ref{Sect:disc}.
The computation of the distribution is based on a recursive formula for a numerical integration, 
and it takes a longer time than the computation of an expected liquidity cost, but a shorter time than a Monte Carlo simulation.

Recall that the liquidity cost of a European option with strike $K$, under a discrete delta hedging and the above condition, is
\begin{equation}
L = \sum_{i=1}^{N} \alpha' S_{i-1} \left(\frac{\partial D_{N-i+1}(S_{i-1}, K)}{\partial x}\right)^2(S_i - S_{i-1})^2 \label{Eq:LC_disc2}\end{equation}
which is similar to Eq.~\eqref{Eq:LC_disc} but over an equidistant partition $0=t_0 < t_1 < \cdots < t_N = T$
and the notation of the delta function is slightly different from the previous sections.
The subscript of $D$ denotes the time to maturity, and $S_{i-1}$ is at the time $t_{i-1}$ stock price.
The partial derivative $\partial x$ implies the partial derivative with respect to the first argument of $D$, the stock price (i.e., the Gamma of the option price).
Calculating the distribution of $L$ is equivalent to calculating the expectation of $(L-\xi)^+$ for some constant $\xi$.
The calculation of the expectation of $(L-\xi)^+$ is similar to a derivation of an Asian-type call option, 
and the distribution of $L$ is derived by differentiating the call option price twice with respect to $\xi$ \citep{Breeden1978}.
To calculate the distribution of $L$, a recursive method, which was introduced in \cite{Lee}, \cite{CLP}, and \cite{Park} is extended and applied.

The numerical method to compute the distribution of the liquidity cost is explained briefly.
Let $g_{n}^{N}(x_0, \cdots, x_n, K, \xi)$ represent the $\F_n$-conditional expectation of $L$ such that
\begin{align}
\begin{aligned}
& g_{n}^{N}(S_0, \cdots, S_n, K, \xi) = \E^{\Q}[(L-\xi)^+ | \F_n ] \\
&=\E^{\Q} \left[\left. \left(\sum_{i=1}^{N} \alpha' S_{i-1} \left(\frac{\partial D_{N-i+1}(S_{i-1},K)}{\partial x}\right)^2(S_i - S_{i-1})^2  - \xi \right)^+  \right| \F_n \right].
\end{aligned}
\end{align}
The superscript, $N$, of $g$ denotes the total number of terms inside the summation, and the subscript
$(n)$ denotes the time when the expectation is performed.
As in the previous section, the risk-neutral probability is used, but the choice of the measure depends simply on the practitioner's purpose.

The goal is the time 0 expectation, for example,
\begin{equation}
g_0^N(S_0, K, \xi) = \E^{\Q}[(L-\xi)^+].
\end{equation}
Since
\begin{equation}
g_n^N(S_0, \cdots, S_n, K, \xi) = \E^{\Q}[ g_{n+1}^N (S_0, \cdots, S_n, S_{n+1}, K, \xi) | \F_n ],
\end{equation}
the following integration form can be derived:
\begin{equation}
g_n^N(x_0, \cdots, x_n, K, \xi) = \int_0^\infty g_{n+1}^N (x_0, \cdots, x_n, x_{n+1}, K, \xi)p(x_{n+1};x_n) \D x_{n+1}\label{Eq:recursive1}
\end{equation}
where $p$ is the transition function for the stock price from $x_n$ to $x_{n+1}$ over the time interval $[t_n, t_{n+1}]$ under the corresponding measure.
By rearranging the strike price of the `Asian-type option', 
\begin{equation}
g_n^N(x_0, \cdots, x_n, K, \xi) = g_0^{N-n}(x_n, K, \xi_n) \label{Eq:relation1}
\end{equation}
where
\begin{equation}
\xi_n = \xi - \sum_{i=1}^{n} \ell_i := \xi - \sum_{i=1}^{n} \alpha' x_{i-1} \left(\frac{\partial D_{N-i+1}(x_{i-1},K)}{\partial x}\right)^2(x_i - x_{i-1})^2.
\end{equation}

By denominating the Asian-type option price by $x_n$,
\begin{equation}
g_0^{N-n}(x_n, K, \xi_n) = x_n g_0^{N-n}\left(1, \frac{K}{x_n}, \frac{\xi_n}{x_n} \right),
\end{equation}
and by Eq.~\eqref{Eq:recursive1},
\begin{equation}
x_n g_0^{N-n}\left(1, \frac{K}{x_n}, \frac{\xi_n}{x_n} \right) = \int_0^\infty x_{n+1} g_0^{N-n-1}\left(1, \frac{K}{x_{n+1}}, \frac{\xi_{n+1}}{x_{n+1}} \right)p(x_{n+1};x_n) \D x_{n+1}.
\end{equation}
Since the above equation holds for all $x_n>0$, by putting $x_n=1$, and using the relation between $\xi_n$ and $\xi_{n+1}$,
\begin{equation}
g_0^{N-n}\left(1, K, \xi_n \right) = \int_0^\infty x_{n+1} g_0^{N-n-1}\left(1, \frac{K}{x_{n+1}}, \frac{\xi_{n} - \ell_{n+1}}{x_{n+1}} \right)p(x_{n+1}; 1) \D x_{n+1}.\label{Eq:recursive2}
\end{equation}
In particular, when $n=N-1$, by Eq.~\eqref{Eq:relation1} and the definition of $g$,
\begin{align}
\begin{aligned}
g_0^1(1, K, \xi_{N-1}) &= g_{N-1}^{N}(x_0,\cdots,x_{N-1},K,\xi)\\
&=\int_0^\infty \left( \alpha' \left(\frac{\partial D_1(1,K)}{\partial x}\right)^2(x_N - 1)^2  - \xi_{N-1} \right)^+ p(x_N;1) \D x_N.
\end{aligned}
\end{align}
Repeating the numerical integration from $g_0^1$ to $g_0^N$ and applying
$ g_0^N(x, K, \xi) = x g_0^N(1, K/x, \xi/x)$
the Asian-type option price, the distribution of the liquidity cost can be calculated.

In Figure~\ref{Fig:distribution1}, ~\ref{Fig:distribution2}, ~\ref{Fig:distribution3} the comparison between the distribution calculated using the numerical method (solid line) and the Monte Carlo simulation (bar graph) is presented.
The numerical results are quite close to the Monte Carlo simulation results over all $\sigma$, strike $K$ and maturity $T$.

\begin{figure}
	\centering
		\includegraphics[width=0.32\textwidth]{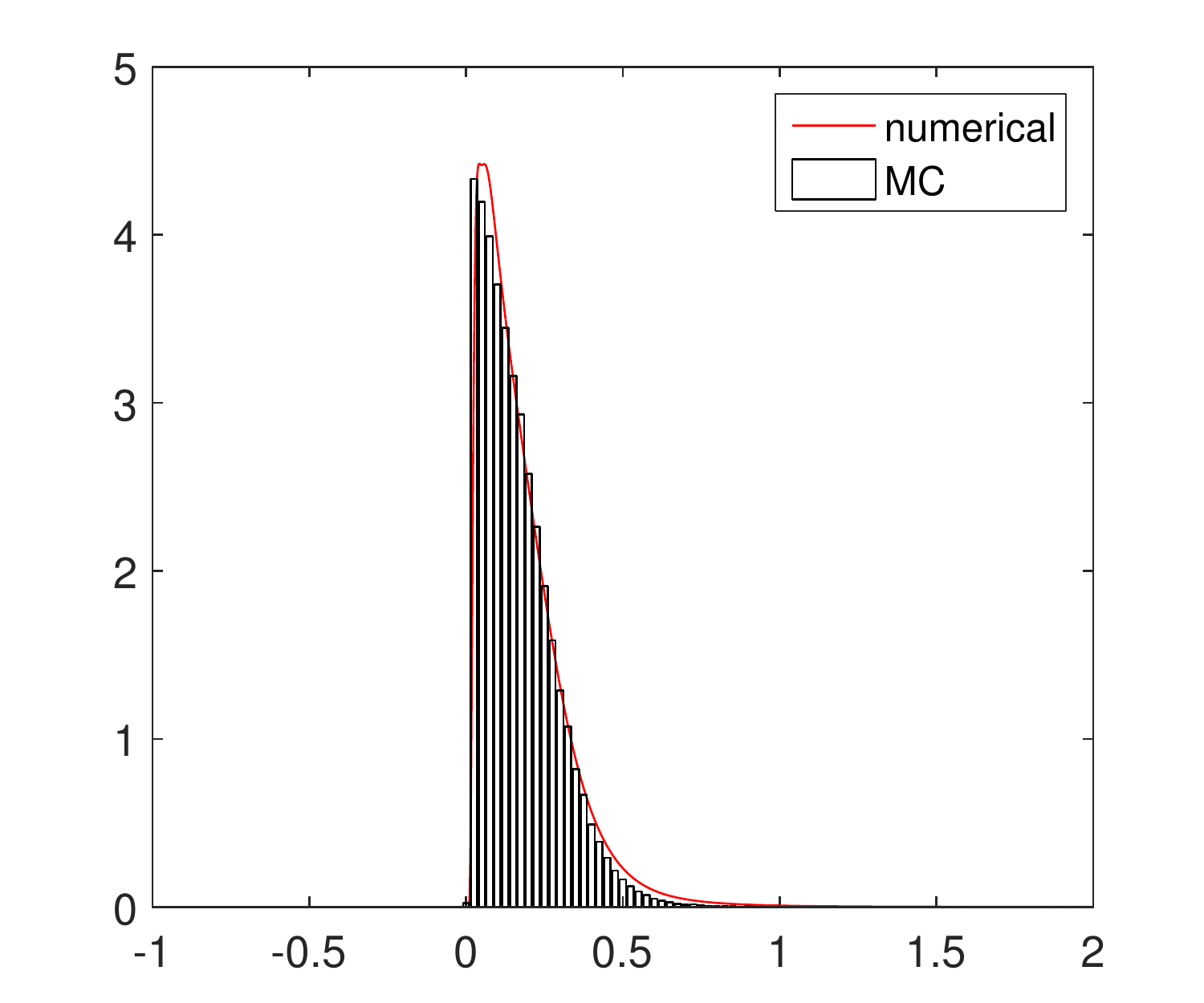}
		\includegraphics[width=0.32\textwidth]{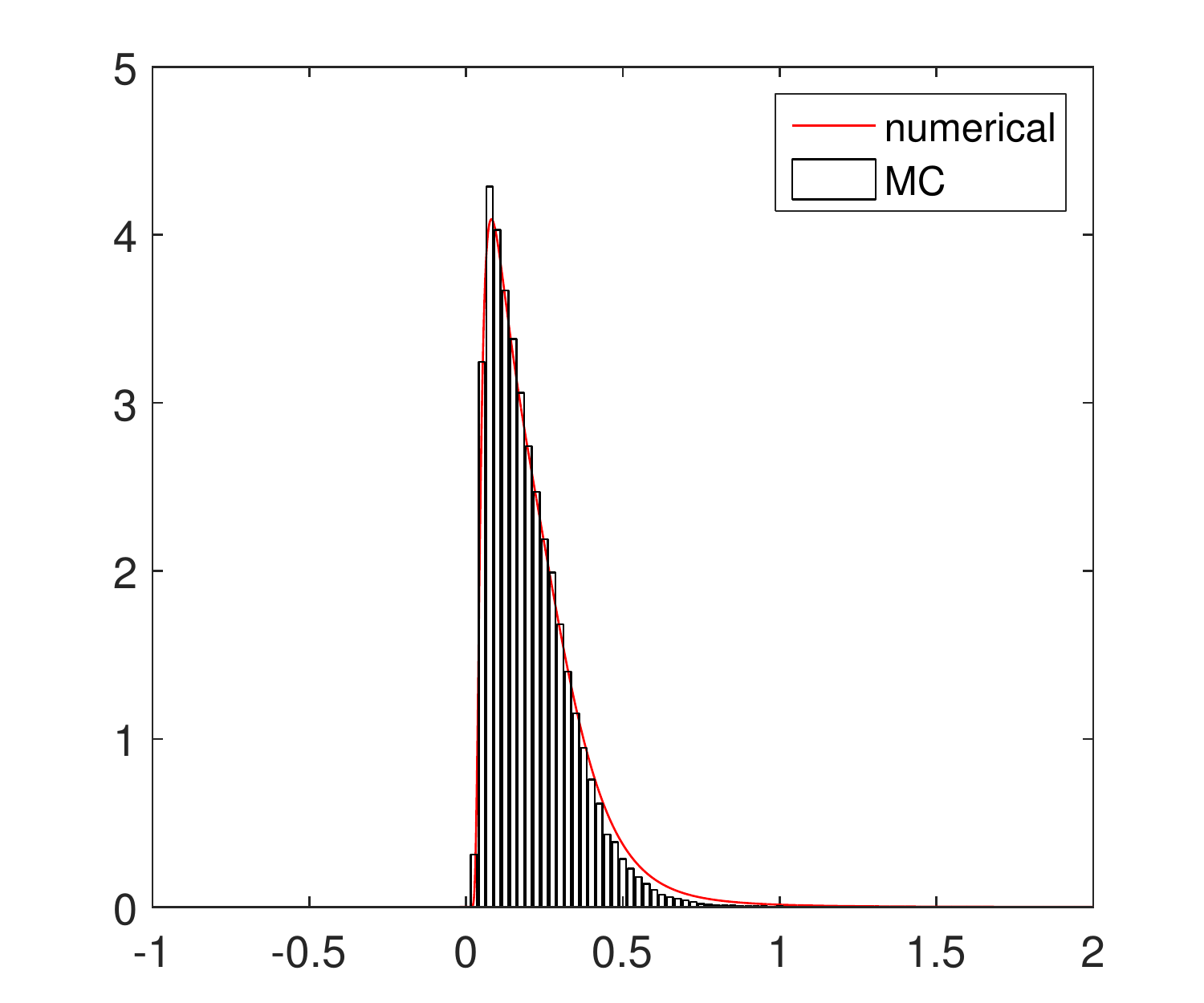}
		\includegraphics[width=0.32\textwidth]{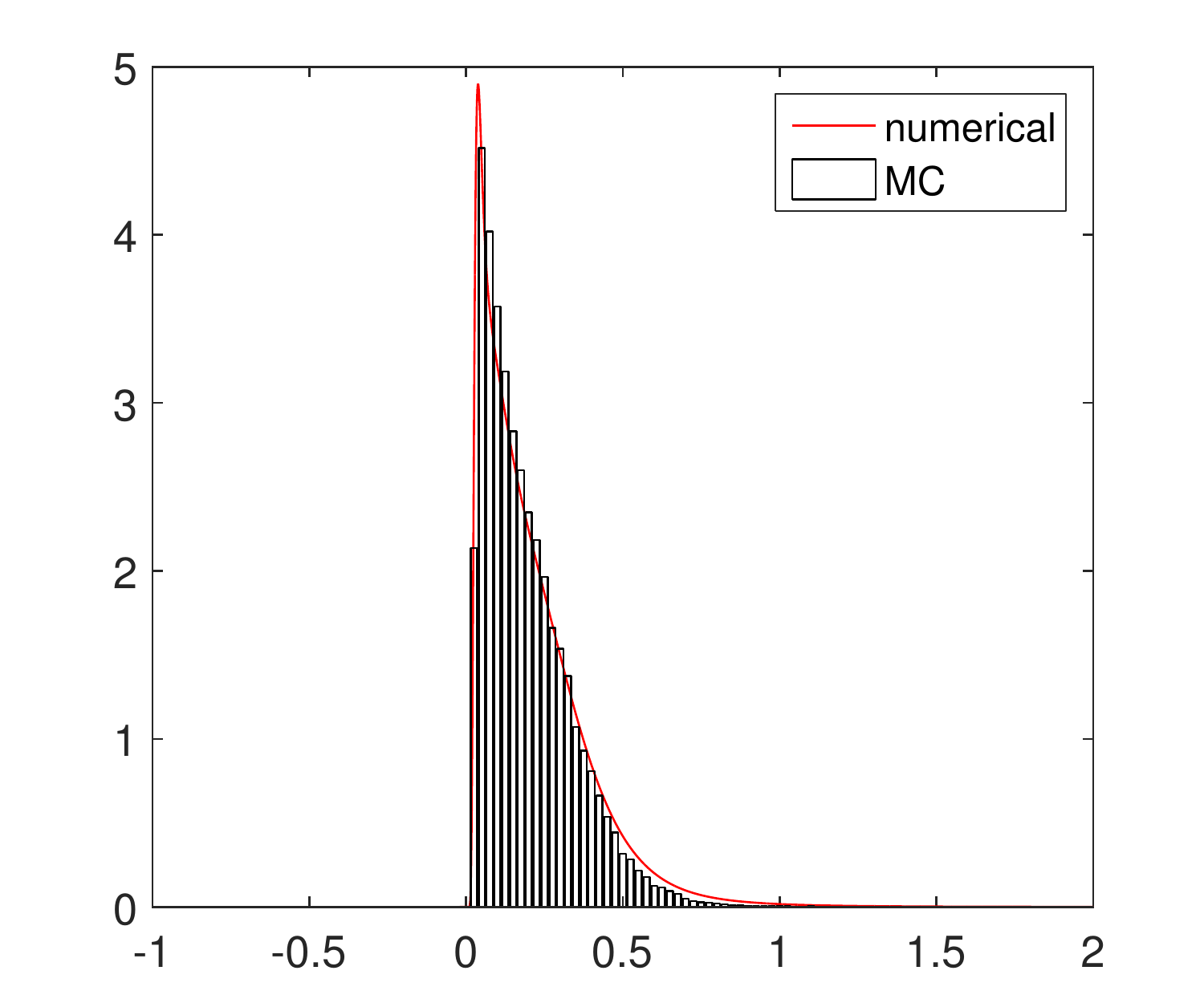}
	\caption{Distribution of liquidity errors with $\sigma=0.8$, $T=0.1$ and $K=0.9$ (left) $ K=1$ (center) $K=1.1$ (right)}\label{Fig:distribution1}
\end{figure}

\begin{figure}
	\centering
		\includegraphics[width=0.32\textwidth]{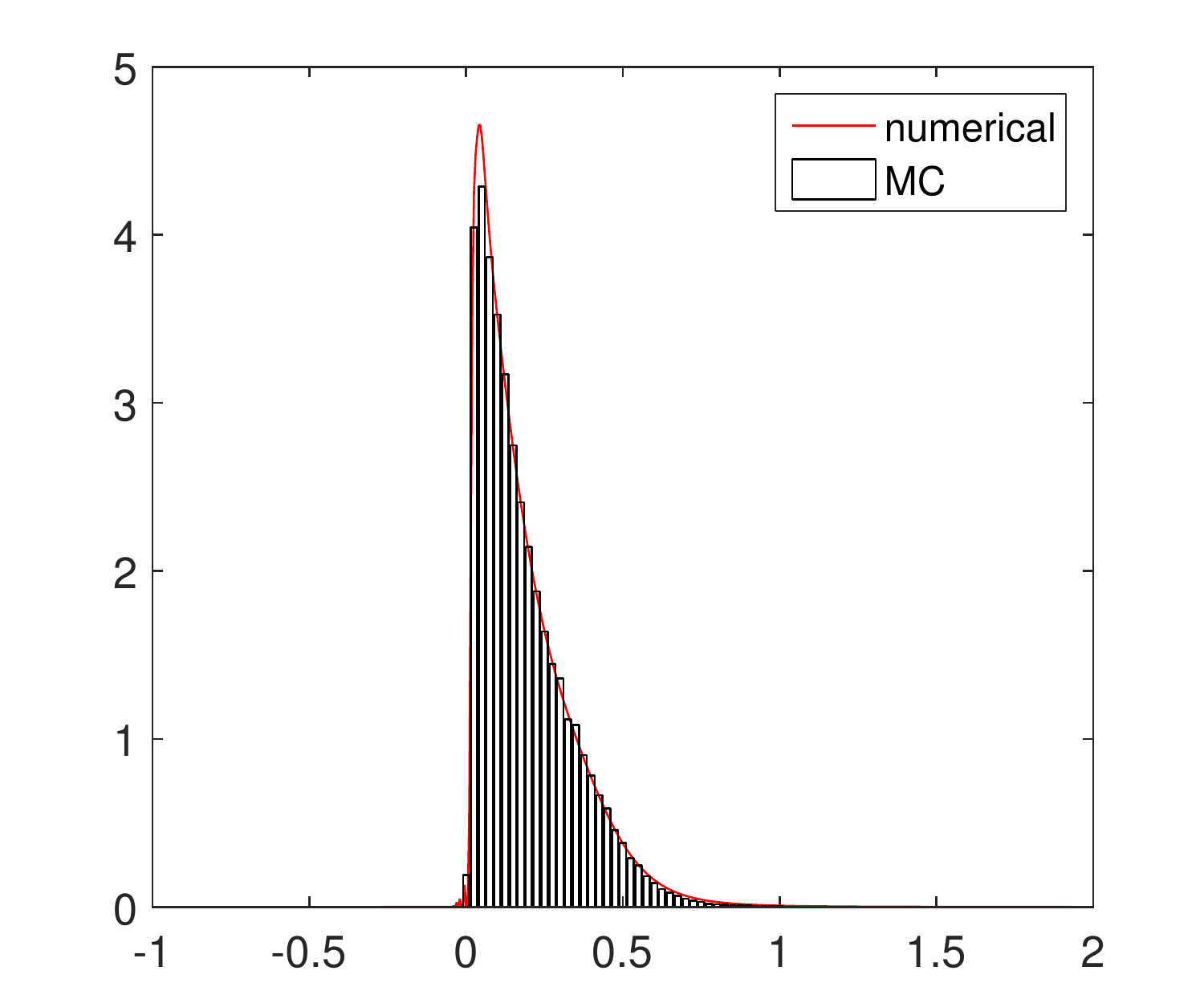}
		\includegraphics[width=0.32\textwidth]{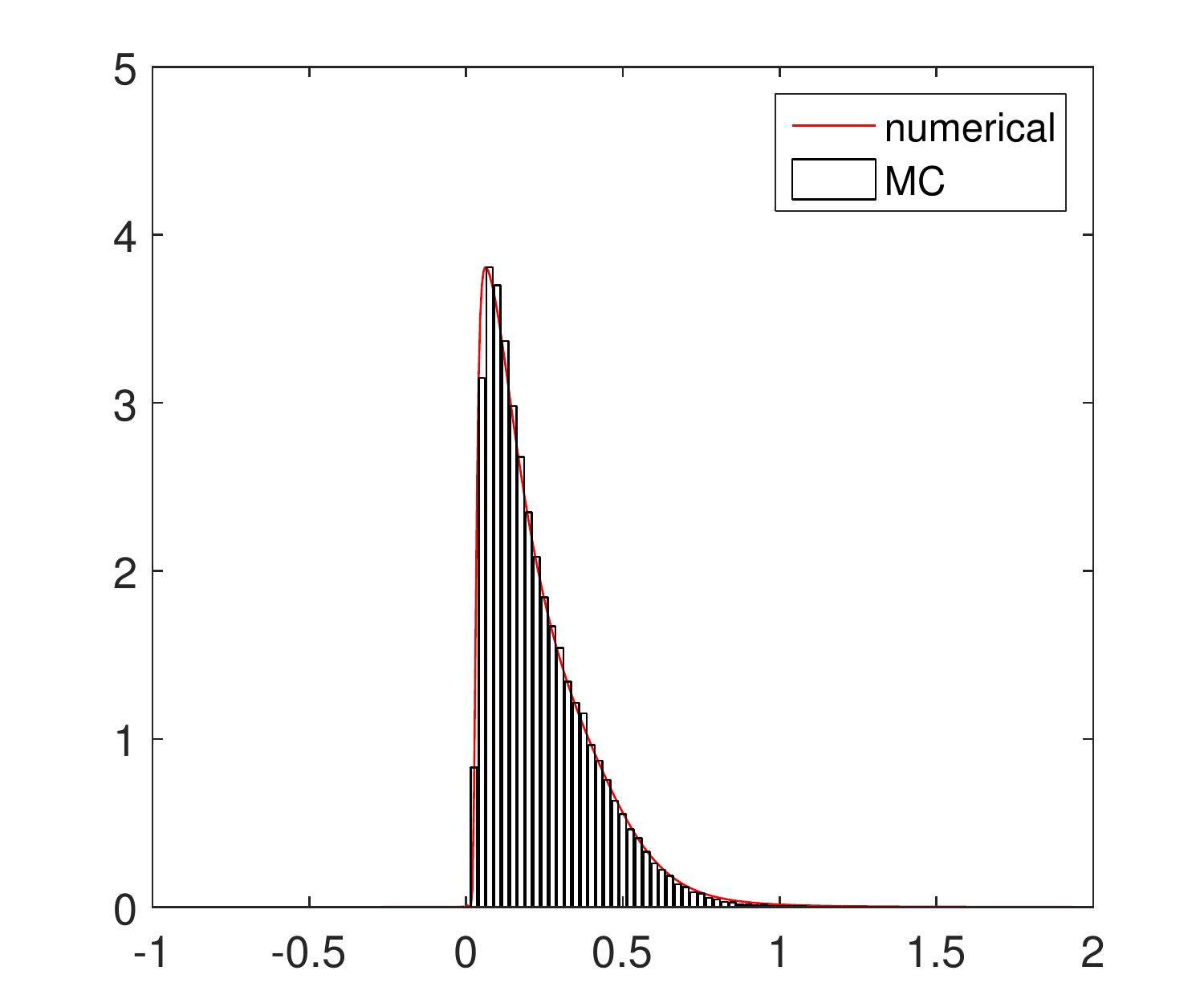}
		\includegraphics[width=0.32\textwidth]{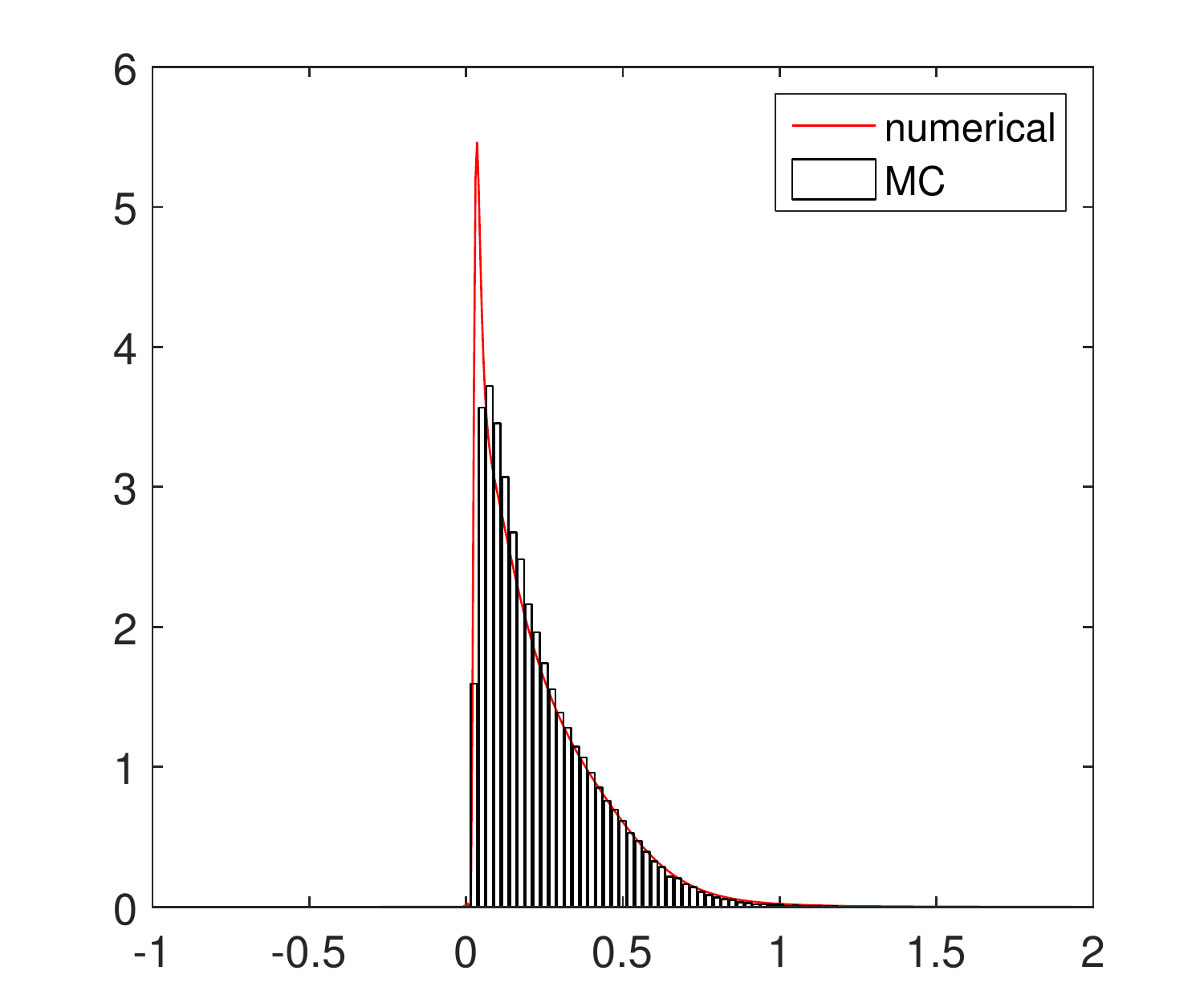}
	\caption{Distribution of liquidity errors with $\sigma=0.4$, $T=0.5$ and $K=0.9$ (left) $ K=1$ (center) $K=1.1$ (right)}\label{Fig:distribution2}
\end{figure}

\begin{figure}
	\centering
		\includegraphics[width=0.32\textwidth]{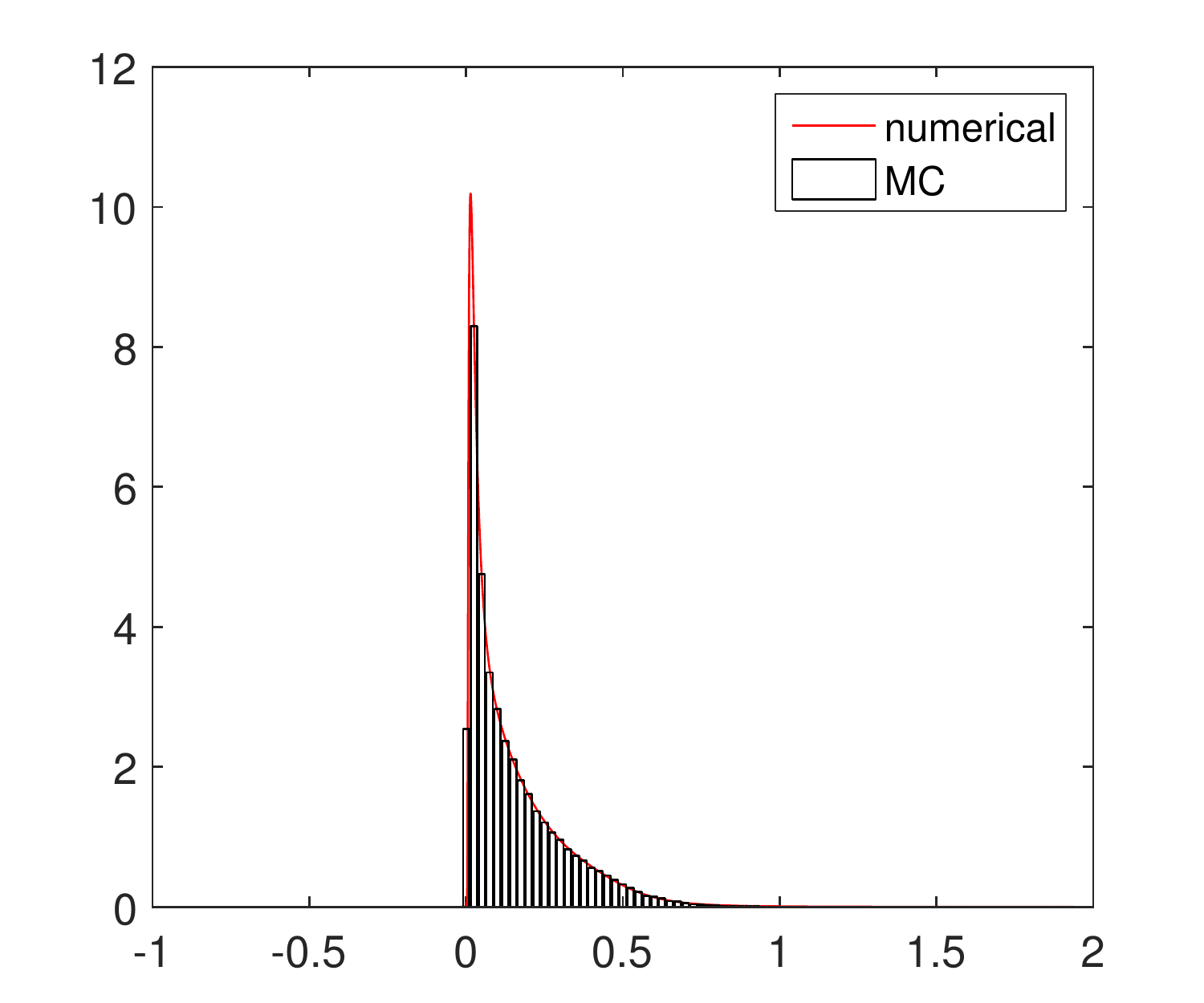}
		\includegraphics[width=0.32\textwidth]{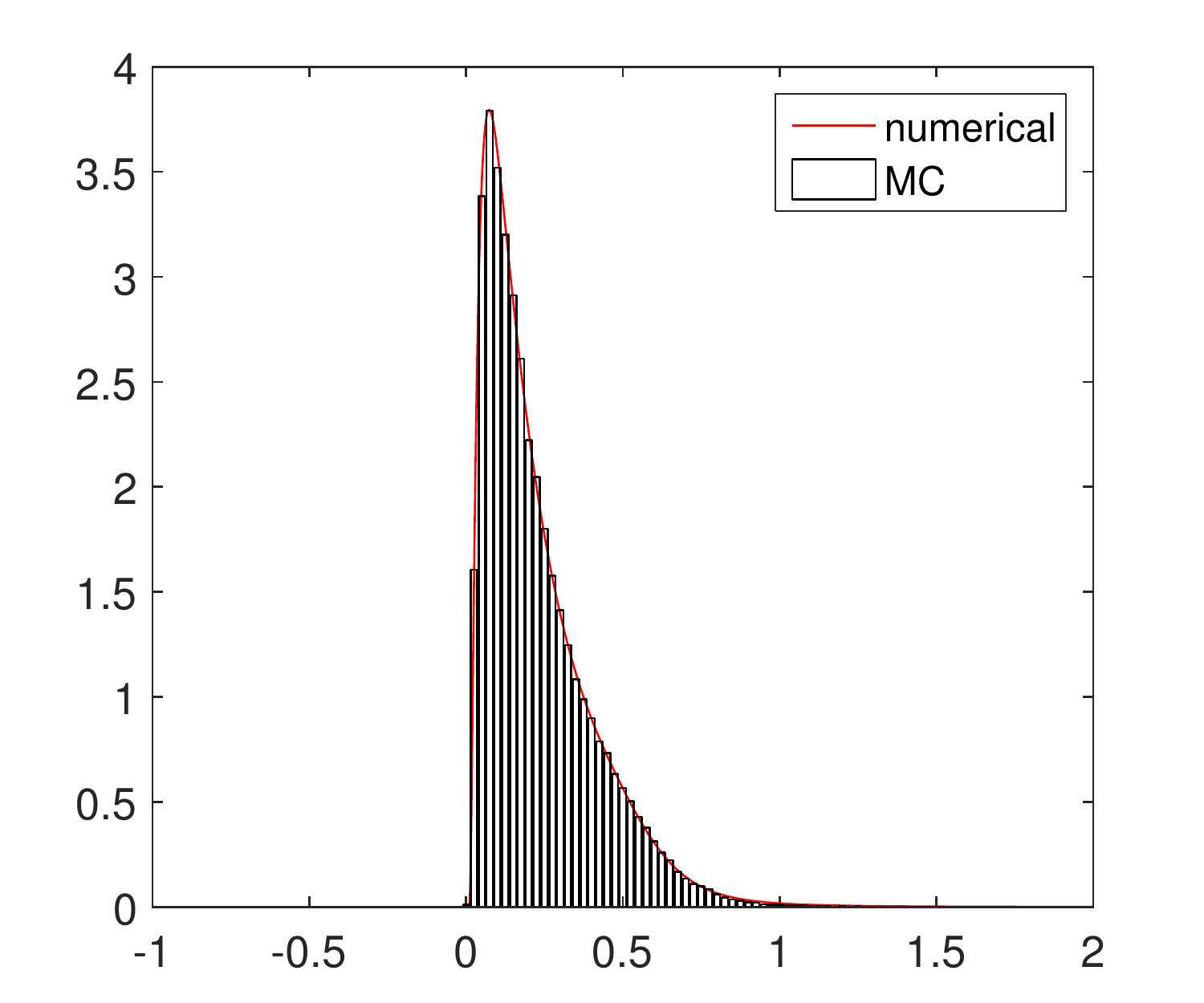}
		\includegraphics[width=0.32\textwidth]{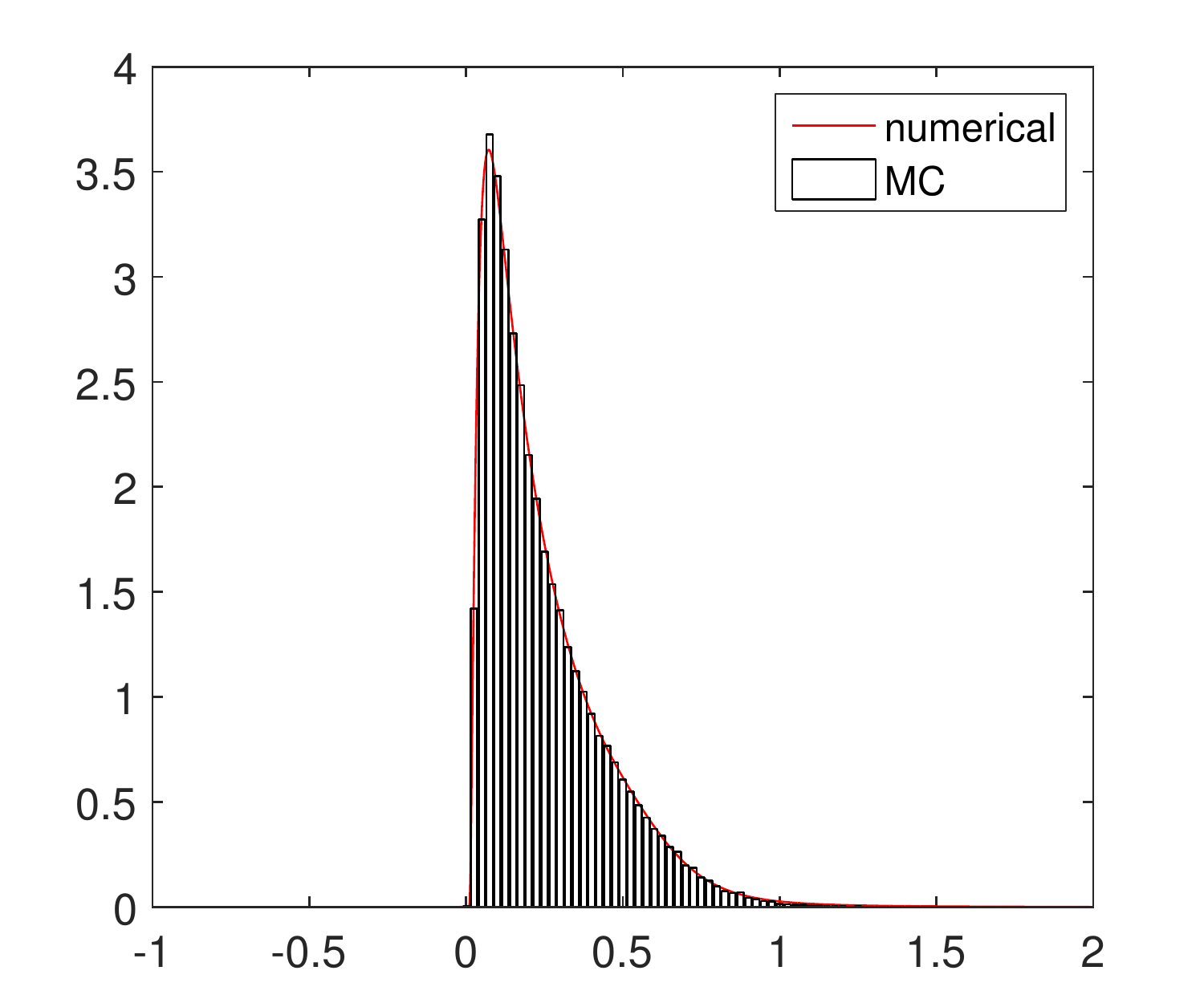}
	\caption{Distribution of liquidity errors with $\sigma=0.2$, $T=1$ and $K=0.9$ (left) $ K=1$ (center) $K=1.1$ (right)}\label{Fig:distribution3}
\end{figure}

\section{Supply curve and order book model}\label{Sect:LOB}
The expected liquidity cost depends on $\alpha$, and in practice, $\alpha$ depends on the estimation procedure of the supply curve.
If the supply curve is discontinuous, then continuous hedging is excluded, and discrete trading should be considered.

This section discusses how to apply the existing order book model to our method for calculating the liquidity cost.
If the limit order model is assumed to be stochastic, then the supply curve is also stochastic.
The dynamics of the supply curve, which in this approach are governed by parameter $\alpha_t$ or $\alpha'_t$, are independent from the stock price and the investor's hedging process.
The expected liquidity cost for a continuous hedging process is then represented by
\begin{equation}
\E \left[ \int_0^T \alpha_t S_t \D [D]_t \right] = \int_0^T \E [\alpha_t] \E \left[\sigma^2_t S^3_t \left(\frac{\partial D(t,S_t)}{\partial x} \right)^2 \right] \D t
\end{equation}
Here, it is assumed that $\D [S]_t = \sigma^2_t S^2_t \D t$ for some volatility process and $\sigma_t$ for convenience.
A similar argument can be made for discrete trading.
The formula shows that only the expected slope of the supply curve $\E [\alpha_t]$ needs to be considered when calculating the total expected liquidity cost.
The relationship between the existing stochastic queuing model of the limit order book and the expected supply curve is explained.

\cite{Cont2010} proposed a stochastic model for order book dynamics where the limit and market orders and the cancellation of limit orders are described by independent Poisson processes.
In this framework, the limit orders are placed on a discrete price grid and the limit order arrivals depend on how far the price of the limit order is from the best quote (i.e., the relative price).
\cite{Biais1995}, \cite{Bouchaud2002}, and \cite{Potters2003} also showed that the order arrival rates depend on relative price rather than actual price.

The limit buy and sell orders arriving at a distance of $i$ ticks from the opposite best quote, or simply the $i$-th tick, are assumed to have the same intensity rate $\lambda(i)$.
The market buy and sell orders, which occur in only the best quotes of ask and bid, respectively, are assumed to arrive at independent and exponential times at a rate of $\mu$.
The rate of cancellation of the limit orders at a distance of the $i$-th tick from the opposite best quotes is proportional to the existing sizes of the limit orders.
If the total number of the outstanding limit orders is $y$, then the cancellation rate is $\theta(i)y$.
\cite{Cont2010} chose the size unit as the average size of the limit orders, but in this section, the size unit is assumed to be one without a loss of generality.

As the arrival rates of the bid and ask sides are assumed to be equal to each other, it is sufficient to consider only one side of bid and ask.
Let $N_t^{\ell(i)}$ be the total number of limit orders (of bid or ask side) at the $i$-th tick up to time $t$; let $N_t^{c(i)}$ be the total number of cancellations of limit orders at the $i$-th tick up to time $t$, and
$N_t^{m}$ the total number of market orders arrived up to $t$.
Under this assumption, the number of outstanding limit orders at time $t$ at the $i$-th tick is represented by three Poisson processes:
\begin{equation}
Y^{1}_t = N_t^{\ell(1)} - N_t^{m} - N_t^{c(1)} \quad \mathrm{and}\quad Y^{i}_t = N_t^{\ell(i)}  - N_t^{c(i)}, \quad \textrm{for } i>1. 
\end{equation}

For $i=1$, based on the above equation, the expected number of outstanding limit orders at time $t$, denoted by $y^{1}_t$, is represented by an integration equation
\begin{equation}
y^{1}_t = \lambda(1)t - \mu t - \theta(1) \int_0^t y^{1}_s \D s
\end{equation}
By solving the above equation, the expected outstanding limit orders is
\begin{equation}
y^{1}_t = \frac{\lambda(1) - \mu}{\theta(1)}\left(1 - \e^{-\theta(1) t}\right)
\end{equation}
where$y^{1}_0 = 0$.
As $t\rightarrow \infty$, 
\begin{equation}
y^{1}_t \rightarrow y^{1} := \frac{\lambda(1) - \mu}{\theta(1)}
\end{equation}
which is the expected number of outstanding limit orders at the first tick at the stationary state.
\begin{equation}
y^{i}_t \rightarrow y^{i} := \frac{\lambda(i) }{\theta(i)}.
\end{equation}

With these results, the expected stationary state limit order curve $m(q)$ and corresponding supply curve are defined as Eq.~\eqref{Eq:supply_curve}.
We have
\begin{equation}
m(q) = \mathrm{sign}(q) \sum_{i=1}^j y^{i}
\end{equation}
if $y^{j-1} < q \leq y^{j}$, for $j \geq 1$, and $y^0=0$.
Therefore, with the given exponential rates $\lambda(i), \theta(i)$, and $\mu$, one can derive the limit order and supply curves.
The rates can be estimated from the data consisting of the sequences of limit and market orders by counting the number of orders as described in \cite{Cont2010}.
In some studies, the limit order arrival function is fitted by a power law function of the form
\begin{equation}
\lambda(i) = \frac{k}{i^a}
\end{equation}
for some $a$ and $k$ \citep{Bouchaud2002,Zovko2002}.

Once the rates are estimated, the expected supply curve is derived; the remaining part of the calculation of the expected liquidity cost is straightforward, as described in Sections~\ref{Sect:conti}~and~\ref{Sect:disc}.

\section{Conclusion}\label{Sect:concl}
An integration formula was derived for the expected liquidity cost when performing the delta hedging process of a European option.
The liquidity cost is represented by the multiplication of the unit liquidity cost, current stock price, liquidity parameter, and the square of numbers of options being hedged.
The numerical procedure of the integration is correct and much faster than using the Monte Carlo method.
Interestingly, the expected liquidity cost of at-the-money European call options is significantly affected by volatility and maturity.
In general, the expected liquidity cost increases with increasing volatility and maturity.
This paper provided a numerical method for calculating the distribution of the liquidity cost.
Furthermore, the relationship between the queuing modeling of the order book and the present approach was explained.

%\section*{References}

\bibliography{Bib}
\bibliographystyle{elsarticle-harv}

\end{document}